\begin{document}

\markboth{Xuejiao Liu, Yu Qiao and Benzhuo Lu}{}

%
%

\title{Analysis of the Mean Field Free Energy Functional of Electrolyte Solution with Non-zero Boundary Conditions and the Generalized PB/PNP Equations with Inhomogeneous Dielectric Permittivity}

\author{Xuejiao Liu\footnotemark[1], Yu Qiao\footnotemark[1], and Benzhuo Lu\footnotemark[1]\footnotemark[2]}
\footnotetext[1]{State Key Laboratory of Scientific and Engineering Computing, National Center for Mathematics and Interdisciplinary Sciences, Academy of Mathematics and Systems Science, Chinese Academy of Sciences, Beijing 100190, China (liuxuejiao@lsec.cc.ac.cn, qiaoyu@lsec.cc.ac.cn, bzlu@lsec.cc.ac.cn).}
\footnotetext[2]{Corresponding author.}

\maketitle

\begin{history}
\end{history}

\textbf{Abstract.} The energy functional, the governing partial differential equation(s) (PDE), and the boundary conditions need to be consistent with each other in a modeling system.
In electrolyte solution study, people usually use a free energy form
of an infinite domain system (with vanishing potential boundary condition) and the derived PDE(s) for analysis and computing.
However, in many real systems and/or numerical computing, the objective
domain is finite, and people still use the similar energy form, PDE(s) but with
different boundary conditions, which may cause inconsistency.
In this work, (1) we present a mean field free energy functional for electrolyte solution within a finite domain with either physical or numerically required artificial boundary. Apart from the conventional energy components (electrostatic potential energy, ideal gas entropy term and chemical potential term), new boundary interaction terms are added for both Neumann and Dirichlet boundary conditions. These new terms count for physical interactions with the boundary (for real boundary) or the environment influence on the computational domain system (for non-physical but numerically designed boundary).
(2) The traditional physical-based Poisson-Boltzmann (PB) equation and Poisson-Nernst-Planck (PNP) equations are proved to be consistent with the new free energy form, and different boundary conditions can be applied.
(3) In particular, for inhomogeneous electrolyte with ionic concentration-dependent dielectric permittivity, we derive the generalized Boltzmann distribution (thereby the generalized PB equation) for equilibrium case, and the generalized PNP equations for non-equilibrium case, under different boundary conditions.
Numerical tests are performed to demonstrate the different consequences resulted from different energy forms and their derived PDE(s).

\textbf{Key words.} Free energy functional; electrolyte; boundary conditions; variable dielectric; generalized Poisson-Nernst-Planck/Poisson-Boltzmann equations.

\textbf{AMS subject classifications.} 35J, 35Q, 49S, 82D, 92C.


\section{Introduction}
As a requirement both in physics and mathematics, the system energy functional,
the governing partial differential equation(s) (PDE), and the boundary condition(s) (BD) need to be consistent.
People usually derive the PDE(s) through minimization of a free energy functional $F$, in which the information of
boundary condition(s) associated with the PDE is in principle included.
However, a common case is that once a type of PDE is obtained (usually from an energy functional for a infinite system),
people may study, either on theoretically or numerically,
the PDE under different boundary conditions. But in this case the changed boundary condition may be inconsistent with
the original energy form, and may cause unreasonable results.
An example is the electrolyte system, which is the focus of this work.

Electrolyte solution is a charged system mixed with polarizable solvent and mobile ions, which exists
in many areas such as chemistry, colloid, fuel cell, material science, and biology systems.
Enormous amount of literatures can be found in this area. In mean field theory, a Poisson-Boltzmann (PB) equation
is a physically reasonable description of the equilibrium state of electrolyte solution. In non-equilibrium state
(i.e., non-balanced ionic flow exists), the Poisson-Nernst-Planck (PNP) equations is a proper model to describe the coupling
of ionic diffusion processes and the generated electric field.
The PB equation and PNP equations are two most commonly used PDEs in electrolyte solution system.
These equations can also be derived from variation of the free energy.
Sharp and Honig have used the calculus of variations to provide a unique definition of the total energy and
to obtain expressions for the total mean field electrostatic free energy of electrolyte solution (including fixed macromolecules)
for both linear and nonlinear PB equations,\cite{Sharp90} and later Gilson et al. derived the mean forces based on
mean field electrostatic free energies.\cite{Gilson93}
\begin{equation}\label{1.1}
    F = \int\{\rho^{f}\phi - \frac{1}{2}\epsilon|\nabla\phi|^{2} - \beta^{-1}\sum_{i=1}^{K}c_{i}^{\infty}(e^{-\beta q_{i}\phi}-1)\}dV.
\end{equation}
And in turn, the PBE can also be expected to be derived from these energy functionals.
Gilson et al. have shown that if the free energy $F$ is considered as a functional with respect to (w.r.t.) the potential function,
the potential which extremizes $F$ is also the potential that satisfies the Poisson-Boltzmann equation.\cite{Gilson93}
Fogolari and Briggs have pointed out that the potential satisfying the PBE in fact maximizes the energy functional if it
is considered as a functional w.r.t potential.\cite{Fogolari97} When the free energy functional is regarded as functional w.r.t the concentration $c$ rather than the potential $\phi$, they proved that the PB distribution is then the only distribution which minimizes the free energy (the Poisson is considered as a constraint).\cite{Fogolari97}
This conclusion was also re-stated in a more mathematical way later.\cite{LiBo09b}
The energy functional takes form
\begin{equation}\label{1.2}
    F = \int_{\Omega}\frac{1}{2}\rho\phi dV
       + \beta^{-1}\sum_{i=1}^{K}\int_{\Omega}c_{i}[\log(\Lambda^{3}c_{i})-1]dV - \sum_{i=1}^{K}\int_{\Omega}\mu_{i}c_{i}dV,
\end{equation}
with a Poisson equation as a constraint.
Another advantage of this form is that this form can be applied to study of both equilibrium and non-equilibrium state
of the electrolyte solution.
It is worth noting that those free energy forms are for electrolyte solution in an infinite domain
where the potential (and the derivative) goes to zero at the boundary.
However, a real physical system and/or a practically computational domain (as appeared in finite element/finite difference methods)
are often finite, and the boundary conditions are usually non-trivial and non-zero.
In electrokinetics, most physically interesting properties arise from different non-zero boundary conditions .\cite{BTu13,Shixin13,Chapman13,Lee11,Xu11}
In these non-zero BD cases for charged system, the system's free energy also needs to include
the physical interaction between the system and the boundary. As a consequence in mathematical analysis,
these additional boundary energy terms also need to appear in the energy functional.
In other words, the traditional PB equations with general non-zero Dirichlet or/and Neumann BDs
can not be derived from above free energy form (either Eq.~\eqref{1.1} or \eqref{1.2})
because the boundary term(s) are missed in the energy functional. The issue will be solved in this work.
It is worth noting here that even if a real system is infinite, but in practical computation as in finite element
approach, only a finite domain is taken and certain non-trivial BD(s)
need to adopt to simulate the behaviour of the whole system. In this case, if we need a, for instance, non-zero Dirichlet BD,
an energy term needs to be included in the free energy and represent interaction between the system and the Dirichlet type of boundary.
This is physically reasonable, because the boundary interaction term can be an exact representation or proper approximation of the
interaction between the finite modeling system and the infinite outside part which is not involved in the computational
domain (see detailed physical explanations in the Theory section).
Therefore, in the rest of this article, we will not discriminate a boundary as a physical (interfacial) boundary or as an artificial
boundary, as they will be treated similarly in the energy form.

The free energy functional for an infinite electrolyte solution system can be considered as a special case under
zero-boundary condition at infinite boundary. If this energy functional is used to derive the PDE with non-zero BD,
it may resulted "screwed" equation. Such an example can be found in a recent work.\cite{LiBo16}
A following non-zero Dirichlet boundary-value problem of Poisson's equation \eqref{1.3} is considered,
which is constraint of the potential $\phi$ in the traditional free energy functional,
\begin{align}\label{1.3}
  -\nabla\cdot(\epsilon\nabla\phi(c)) &= \rho(c)     &in \;\Omega,\\
  \epsilon\frac{\partial\phi}{\partial n} & = \sigma &on \;\Gamma_{N},\notag\\
  \phi & = \phi_{0} &on \;\Gamma_{D},\notag
\end{align}
where $\frac{\partial\phi}{\partial n}$ denotes the normal derivative at the boundary with $n$ the exterior unit normal.
In analysis, it generally needs to introduce a corresponding homogeneous boundary-value problem of Poisson's equation \eqref{1.4}
which has the unique weak solution $\phi_{D}$.
\begin{align}\label{1.4}
  \nabla\cdot(\epsilon\nabla\phi_{D}(c)) &= 0     & in \quad \Omega,\\
  \epsilon\frac{\partial\phi_{D}}{\partial n} & = 0& on \quad \Gamma_{N},\notag\\
  \phi_{D} & = \phi_{0} & on \quad \Gamma_{D}.\notag
\end{align}
Using variational approach to the free energy functional with incomplete boundary terms can lead to
a "screwed" Boltzmann distribution and an unusual PB equation.
Similarly, for non-equilibrium state and inhomogeneous boundary-value problem, we will show details in following sections that
applying variational approach to the incomplete free energy functional will lead to a set of different PNP equations from
the traditionally established one (supposing $\epsilon$ is constant):
\begin{equation}\label{1.5}
    -\nabla\cdot\epsilon\nabla\phi(c) = \rho^{f} + \lambda\sum_{i=1}^{K}q_{i}c_{i}, \quad in \;\Omega
\end{equation}
\begin{equation}\label{1.6}
    \frac{\partial c_{i}}{\partial t}=\nabla \cdot (D_{i}[\nabla c_{i}+\beta c_{i}\nabla(q_{i}[\phi(c)-\frac{1}{2}\phi_{D}(c)])]),
    \quad in \;\Omega_{s}, i=1, 2, \cdots, K.
\end{equation}
In the physics of electro-diffusion process and in the traditional PNP equations, the drift term $\beta q c\nabla\phi$ is determined by
the electric field, i.e. $\nabla\phi$ and should be irrelevant to $\phi_{D}$. But in Eqs.~\eqref{1.5} and \eqref{1.6},
an additional term $-{1 \over 2}\beta q c\nabla \phi_D$ appears in the drift term and is unavoidable in variational approach using the incomplete energy functional (see the Section 2).

To derive the correct PB and PNP equations subject to different BDs (Neumann, Dirichlet or their co-existing case),
we will provide in this paper a complete energy functional form, which is consistent with the PDEs and the corresponding BDs.
Furthermore, the energy functional is also shown to satisfy the energy dissipation law.
Numerical examples demonstrate significant deviations of the predictions from incorrect PB/PNP models (originated from incomplete
energy functional) from the right ones.

In addition, a particular interesting case of this work is to consider the situation that
dielectric coefficient is dependent on ionic concentration. The general free energy functional includes this situation
and variational approach is applied to derive the generalized PB and PNP equations under different boundary conditions.
Ionic solutions may be considered to consist of 3 constituents: the charged anions and cations, "hydration" solvent molecules
near the vicinity of the ions, and "free" solvent molecules. The hydration shells will affect the dielectric coefficient in an
ionic solution.\cite{Wei89,Wei90,Wei92} A lot of experiments and theoretical analysis have indicated that the dielectric coefficient decreased with the increase of local ionic concentrations.\cite{Hasted48,HO77,HO78,Nortemann97,Berk06,Lubos09,Hlli14} In our previous paper,\cite{Hlli14} we present a variable dielectric PB model for biological study, in which the dielectric coefficient is ionic concentration-dependent. However, the equation is not mathematically consistent with the system's free energy functional. In this paper, we analyse and discuss a general dependence form of the dielectric coefficient on local concentrations, and the governing equations in both equilibrium and non-equilibrium
are consistently given.

\section{Theory and Method}
\subsection{The mean field free energy functional}
We consider the general case of an electrolyte solution that contains solvent, arbitrary number of mobile ion species, and perhaps membrane-molecule(s) or nanopore as well. The macro-object like molecule, if exists,
is treated as fixed object and usually also carries charges inside or on the surface.
Figure 1 represents two typical biophysical models in computational and analysis.
The domain $\Omega_{s}$ denotes the solvent region where there is a mixed solution with diffusive ion species, such as mobile
ions. The solute region $\Omega_{m}$ is the domain occupied by (in (a)) the fixed biomolecule, such as protein or DNA,
or by (in (b)) the membrane, channel protein/nanopore.\cite{Bzlu07,Siwy06,BTu13,WGWei12} In case (b), if necessary,
$\Omega_m$ can be further divided into different sub-regions, but this does not affect our following analysis.
The whole domain is denoted by $\Omega = \Omega_{m} + \Omega_{s}$.

\begin{figure}[pb]
\centering
\includegraphics[width=0.4\textwidth]{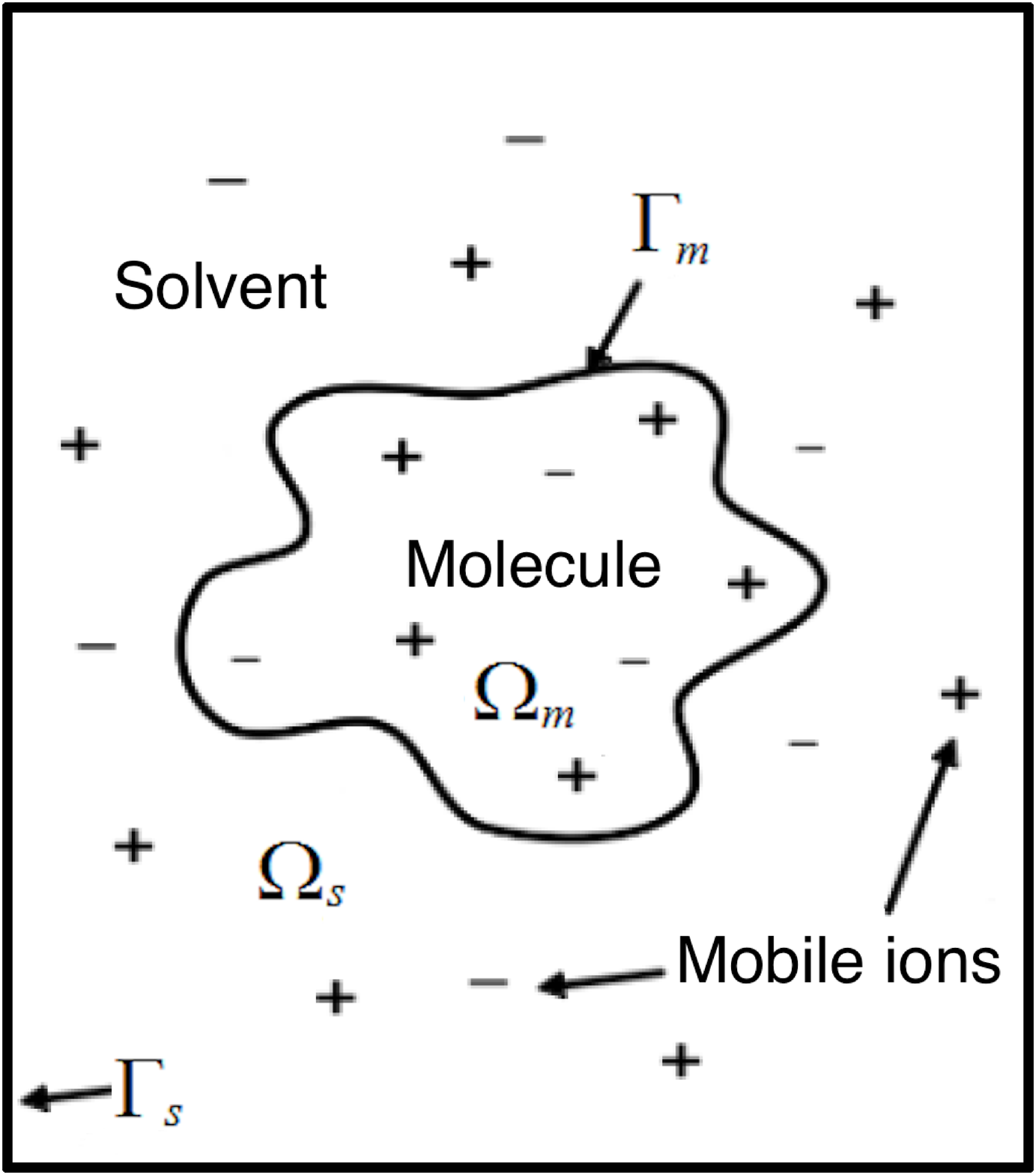}
\includegraphics[width=0.4\textwidth]{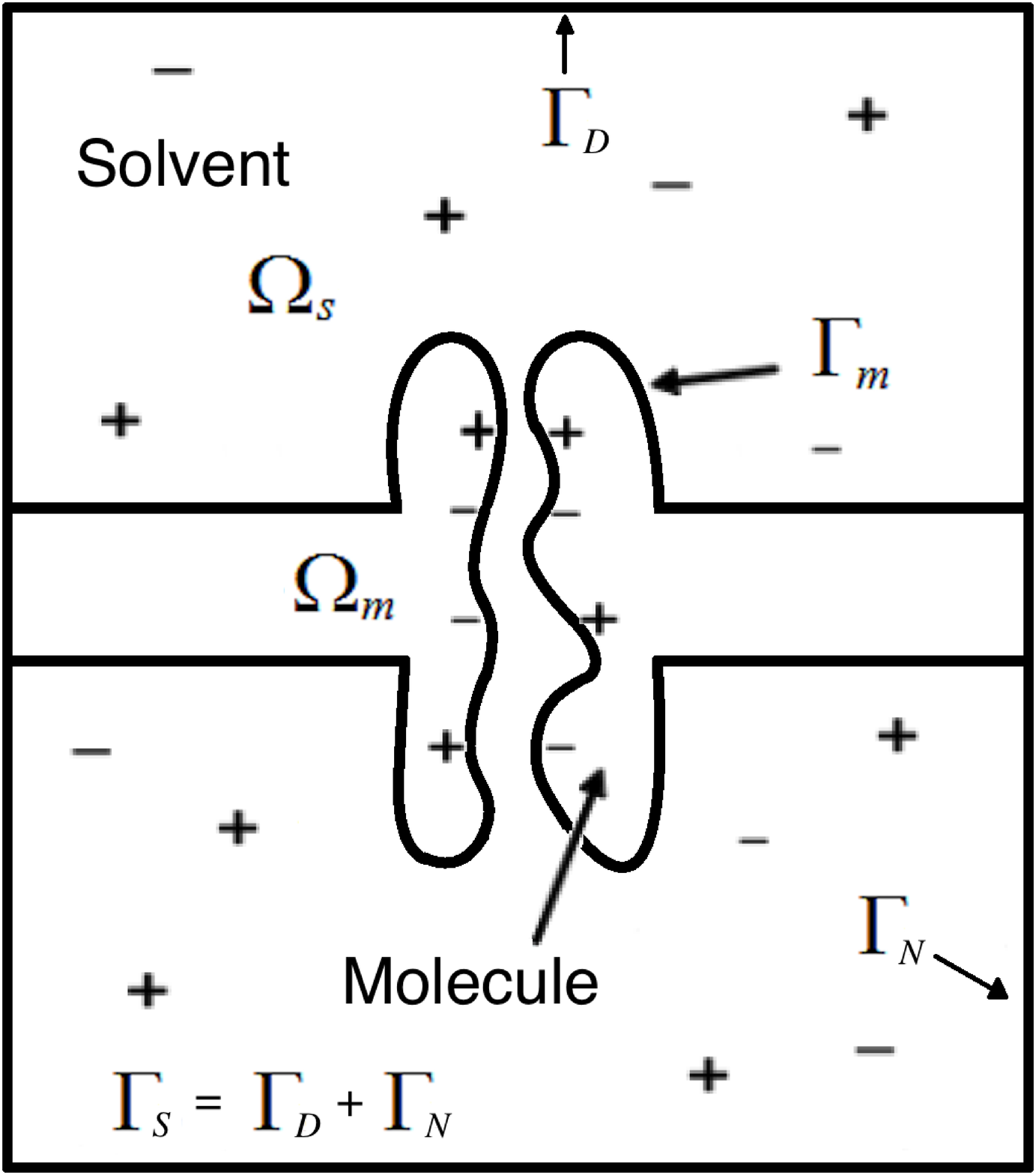}
\caption{A 2-D schematic view of the ionic solution system: (a) with one fixed biomolecule;
(b) with an ion channel (or similar a nanopore) embedded in a membrane.}
\end{figure}

Fig. 1 illustrates a solvated biomolecular system in an open domain $\Omega\in\mathbb{R}^{3}$. The open subdomain $\Omega_{m}\subset\Omega$ represents the biomolecule(s), and the remaining space $\Omega_{s} = \Omega\setminus\bar{\Omega}_{m}$ is filled with ionic solution
($s$ for solvent). Domains $\Omega_{m}$ and $\Omega_{s}$ are separated by a molecular surface $\Gamma_{m}$ (for simplicity, we call $\Gamma_m$ molecular surface in
the rest of the paper, but it also includes the membrane and nanopore surface if they exist).
The ionic flow can not penetrate the non-reactive molecular surface.
We use $\Gamma_D$ and $\Gamma_N$ to represent Dirichlet and Neumann boundary conditions, respectively.
According to the property of the physical system and model, both $\Gamma_D$ and $\Gamma_N$ can be applied to $\Gamma_s$ or part of $\Gamma_s$,
For examples, fixed potentials (Dirichlet BD) are usually given on the out boundary $\Gamma_s$ in PB calculations (Fig. 1(a))
and on the upper and lower boundaries of the whole box in PNP simulations (Fig. 1(b)). Surface charge density (Neumann BD)
is usually applied to the molecular/nanopore surface,\cite{Siwy06,Siwy07,Bzlu10} or a simplified molecular surface
(do not consider the molecular domain $\Omega_m$) \cite{LiBo16} to model the charge amount carried by the molecule.
The boundary of solvent region $\Gamma_{s} = \Gamma_{D} + \Gamma_{N}$.

Free energy discussions in previous works are usually for infinite domain with vanishing boundary conditions and do not
consider the non-zero Neumann and Dirichlet boundary effects. If we consider a finite or a confined region,
the variational approach to the derivation of free energy functional may face problems.
For an electrolyte solution system, the Gibbs free energy of the charged system is\cite{Sharp90,Gilson93,Fogolari97}
\begin{align}\label{2.1}
    F
    &= \frac{1}{2}\int_{\Omega}(\rho^{f}+\sum_{i}c_{i}q_{i})\phi dV
     + \int_{\Omega}\sum_{i}(kT\ln(c_{i}/c_{i}^{b})-kT)c_{i}dV\notag\\
    &= \int_{\Omega}\frac{1}{2}\rho\phi dV
     + \beta^{-1}\sum_{i=1}^{K}\int_{\Omega}c_{i}[\log(\Lambda^{3}c_{i})-1]dV - \sum_{i=1}^{K}\int_{\Omega}\mu_{i}c_{i}dV.
\end{align}
Here, $\rho$ is the total charge density, defined by
\begin{equation}\label{2.2}
    \rho = \rho^{f} + \sum_{i=1}^{K}q_{i}c_{i},
\end{equation}
where $q_{i} = Z_{i}e$ with $Z_{i}$ the valence of the $i$th ionic species and $e$ the elementary charge, $\rho^{f}$ is the permanent (fixed) charge distribution
\begin{equation}
    \rho^{f}(x) = \sum_{j}q_{j}\delta(x - x_{j}) \notag
\end{equation}
which is an ensemble of singular charges $q_{j}$ located at $x_{j}$ inside the biomolecule,
$\phi= \phi(c)$ is the electrostatic potential, $\beta^{-1} = k_{B}T$ with $k_{B}$ the Boltzmann constant and $T$ the temperature,
$\Lambda$ is the thermal de Broglie wavelength, $\mu_{i}$ is the chemical potential for the $i$ th ionic species, and $\mu_{i}^{b}$ is the standard-state chemical potential. The standard PB and PNP equations can be derived from variational method from this energy form.\cite{Bzlu08,WGWei12}

However, as aforementioned, in many real systems and/or numerical computing, the objective domain is finite, and people used to
adopt the same energy form and study different boundary conditions. This may lead to inconsistency among the energy form, PB/PNP equations
and the boundary conditions, and sometimes even resulted in nonphysical PDE model.
To obtain the consistent PDE(s), we need include different boundary interactions into the free energy functionals, and these new terms count for physical interactions with the boundary (for real boundary) or the environment influence on the computational domain system (for artificially
modeled boundary for numerical goal).
Generally, when there exists surface charges (denote the density as $\sigma$)
on the boundary or part of the boundary (where a Neumann boundary condition can be applied),
it is obvious to directly plug a surface energy term (${1 \over 2} \phi  \sigma$) into the free energy functional.
This is physically reasonable because
the surface charges cause an additional interaction with the electric field. This "improved" free energy is also often used and studied,
as in Ref.~\refcite{LiBo16}:
\begin{align}\label{2.3}
    F[c] &= \int_{\Omega}\frac{1}{2}\rho(c)\phi(c) dV + \int_{\Gamma_{N}}\frac{1}{2}\sigma\phi(c)dS \notag\\
    &+ \beta^{-1}\sum_{i=1}^{K}\int_{\Omega}c_{i}[\log(\Lambda^{3}c_{i})-1]dV - \sum_{i=1}^{K}\int_{\Omega}\mu_{i}c_{i}dV.
\end{align}
But this free energy is still not complete, as it lacks the treatment of Dirichlet boundary condition, which is rarely discussed in previous mathematical and physical work.
When a potential is given on a boundary, which means: (1) if the boundary is a physical boundary identified
as certain type of material interface, there must have a mount of surface charge to maintain the Dirichlet condition.
In physics, the surface charge density needs to be equal to $-\epsilon \frac{\partial \phi}{\partial n}$,
which thereby opposes an surface interaction energy $-{1\over 2} \epsilon \frac{\partial \phi}{\partial n}\phi$ to the total free energy; (2) if the boundary is an artificial
boundary (still immersed the electrolyte solution system), we are using a boundary condition to model the influence
from the "cutoff" outside part which is a polarizable dielectric media (environment). The influence can be approximated
by an "effective" surface charge as in the physical boundary case. This charge density also should be consistent with the
electric potential field and the given surface potential. In other words, the effective charge density is equal to ${1\over 2} \epsilon \frac{\partial \phi}{\partial n}$
and leads to a similar energy term.
Therefore, in either of above two cases, there also needs an energy term in the free energy functional for Dirichlet BD.
Here we present the complete free energy functional form:
\begin{align}\label{2.4}
    F[c] &= \int_{\Omega}\frac{1}{2}\rho(c)\phi(c) dV + \int_{\Gamma_{N}}\frac{1}{2}\sigma\phi(c)dS -\int_{\Gamma_{D}}\frac{1}{2}\epsilon(c)\frac{\partial \phi(c)}{\partial n}\phi_{0}dS \notag\\
     &+ \beta^{-1}\sum_{i=1}^{K}\int_{\Omega}c_{i}[\log(\Lambda^{3}c_{i})-1]dV - \sum_{i=1}^{K}\int_{\Omega}\mu_{i}c_{i}dV,
\end{align}
where $\phi= \phi(c)$ is the electrostatic potential determined as the solution to the general boundary-value problem of Poisson's equation
 \begin{align}\label{2.5}
  -\nabla\cdot(\epsilon(c)\nabla\phi(c)) &= \rho(c)     &in \;\Omega,\\
  \epsilon(c)\frac{\partial\phi}{\partial n} & = \sigma &on \;\Gamma_{N},\notag\\
  \phi & = \phi_{0} &on \;\Gamma_{D}.\notag
\end{align}
The first three terms in Eq.~\eqref{2.4} together represent the electrostatic potential energies, and in particular, the second and third terms are the boundary interactions. The fourth term represents the ideal-gas entropy and the last term in Eq.~\eqref{2.4}, represents the chemical potential of the system that results from the constraint of total number of ions in each species. It is worth noting that we here treat $\epsilon$ as a general inhomogeneous dielectric permittivity which is dependent on ionic concentration. This is another concerned topic of the paper.

In the next subsections, we will use the energetic variational approach to illustrate the correctness and consistency of above-mentioned free energy form. If the boundary interactions is missed in the free energy functionals, the energetic variational approach will produce some extra terms of boundary integration, and the Boltzmann distribution may not be obtained or be obtained in a screwed form. Of particular interest in the case of ionic concentration-dependent dielectric permittivity, the complete free energy form will correctly lead to two generalized equations under different boundary conditions.
\subsection{Energetic variational approach}
\subsubsection{First variations}
To derive the first variation of $F$ w.r.t. $c$, we first need the following basic assumptions:
\begin{romanlist}[(iii)]
\item The dielectric coefficient function $\epsilon(c) \in C^{1}([0,\infty))$. Moreover,
there are two positive numbers $\epsilon_{min}$ and $\epsilon_{max}$ such that
\begin{equation}\label{2.6}
    0 < \epsilon_{min} \leq \epsilon(c) \leq \epsilon_{max}\;\;\forall \;c\geq 0;
\end{equation}
\item \;$\Omega$ is bounded and open, $\Gamma = \partial\Omega = \Gamma_{N} + \Gamma_{D}$;
\item \;We also assume that a fixed charged density is given $\rho^{f} :\Omega\rightarrow R$, $\rho^{f}\in L^{\infty}(\Omega)$,
a surface charge density $\sigma : \Gamma_{N} \rightarrow R$,
and a boundary value of the electrostatic potential $\phi_{0} : \Gamma_{D} \rightarrow R$, $\phi_{0} \mid_{\Gamma_{D}}\in W^{2,\infty}(\Omega)$.
\end{romanlist}
We use the standard notion for Sobolev spaces:
$$H_{s}^{1} = \{\phi\in H^{1}(\Omega):\phi=\phi_{0} \quad on \;\Gamma_{D}\},$$
$$H_{s,0}^{1} = \{\phi\in H^{1}(\Omega):\phi=0 \quad on \;\Gamma_{D}\}.$$
The weak form of Eq.~\eqref{2.5} is
\begin{equation}
    \int_{\Omega}\nabla \cdot \epsilon(c)\nabla\phi(c) v dV = -\int_{\Omega}\rho(c)vdV \quad \forall v \in H_{s,0}^{1}(\Omega).\notag
\end{equation}
By the Gauss theorem, we have
\begin{equation}
    -\int_{\Omega}\epsilon(c)\nabla\phi(c) \nabla v dV
    + \int_{\Gamma}\epsilon(c)\frac{\partial \phi(c)}{\partial n}v dS
    = -\int_{\Omega}\rho(c)vdV \quad \forall v \in H_{s,0}^{1}(\Omega).\notag
\end{equation}
Let $\Gamma = \Gamma_{N} + \Gamma_{D}$, and $\forall v\mid_{\Gamma_{D}} = 0$, then we have
\begin{equation}\label{2.7}
    a(\phi, v) = \int_{\Omega}\epsilon(c)\nabla\phi(c) \nabla v dV = \int_{\Omega}\rho(c)vdV + \int_{\Gamma_{N}}\sigma v dS \quad \forall v \in H_{s,0}^{1}(\Omega).
\end{equation}
Since $L^{\infty}(\Omega)\cap H_{s,0}^{1}(\Omega)$ is dense in $H_{s,0}^{1}(\Omega)$, we can identify $u$ as an element in $H_{s,0}^{-1}(\Omega)$. We denote
\begin{align}
    X = \{c = (c_{1},\cdots,c_{K}) \in L^{1}(\Omega, R^{K}): c_{i} \geq 0 \quad a.e. &\quad\Omega, i=1, \cdots, K; \notag\\
    &\sum_{i=1}^{K}q_{i}c_{i}\in H_{s,0}^{-1}(\Omega)\}.\notag
\end{align}
Let $c \in X$, it follows from the Lax-Milgram theorem and the Poinc\'{a}re inequality for functions in $H_{s,0}^{1}(\Omega)$ that the boundary-value problem of Poisson equation Eq.~\eqref{2.5} has a unique weak solution $\phi = \phi(c)$.

Let $c = (c_{1},\cdots,c_{K}) \in X$ and $d = (d_{1},\cdots,d_{K}) \in X$, we define
\begin{equation}\label{2.8}
    \delta F[c][d] = \lim_{t\rightarrow 0}\frac{F[c+td] - F[c]}{t}.
\end{equation}
To get the expression of $\delta F[c][d]$, we need the following theorem.
\begin{theorem}
Let $c = (c_{1},\cdots,c_{K}) \in X$. Assume there exist positive numbers $\delta_{1}$ and $\delta_{2}$ such that
$\delta_{1} \leq c_{i}(x) \leq \delta_{2}$ for a.e. $x\in\Omega$ and $i=1, \cdots, K$. Assume also that $d = (d_{1},\cdots,d_{K}) \in L^{\infty}(\Omega, R^{K})$. Then
\begin{equation}\label{2.9}
    ||\phi(c+td)-\phi(c)||_{H^{1}(\Omega)}\rightarrow 0 \;\;as\;t\rightarrow 0.
\end{equation}
\end{theorem}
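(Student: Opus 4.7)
The plan is to exploit that $\phi(c+td)$ and $\phi(c)$ share the same Dirichlet datum $\phi_0$ on $\Gamma_D$, so that the difference $w_t:=\phi(c+td)-\phi(c)$ lies in $H_{s,0}^{1}(\Omega)$ and is admissible as a test function in the weak formulation~\eqref{2.7}. Writing~\eqref{2.7} at $c+td$ and at $c$ and subtracting, the surface integral $\int_{\Gamma_N}\sigma v\,dS$ cancels, and after adding and subtracting the mixed term $\int_\Omega\epsilon(c+td)\nabla\phi(c)\cdot\nabla v\,dV$ one obtains, for every $v\in H_{s,0}^{1}(\Omega)$,
\[
\int_\Omega \epsilon(c+td)\,\nabla w_t\cdot\nabla v\,dV
= -\int_\Omega\bigl[\epsilon(c+td)-\epsilon(c)\bigr]\nabla\phi(c)\cdot\nabla v\,dV
+ \int_\Omega\bigl[\rho(c+td)-\rho(c)\bigr]v\,dV.
\]

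Next I would test this identity against $v=w_t$. Using the uniform lower bound $\epsilon(c+td)\geq\epsilon_{\min}$ from assumption~(i), Cauchy--Schwarz, and the Poincar\'e inequality on $H_{s,0}^{1}(\Omega)$, I arrive at
\[
\epsilon_{\min}\|\nabla w_t\|_{L^2}^2
\leq \|\epsilon(c+td)-\epsilon(c)\|_{L^\infty}\|\nabla\phi(c)\|_{L^2}\|\nabla w_t\|_{L^2}
+ C_P\|\rho(c+td)-\rho(c)\|_{L^2}\|\nabla w_t\|_{L^2}.
\]
The charge-density factor is immediately $O(t)$ in $L^2(\Omega)$, since $\rho(c+td)-\rho(c)=t\sum_i q_i d_i$ with each $d_i\in L^\infty(\Omega)$ and $\Omega$ bounded, so only the dielectric factor remains.

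Here the pointwise bounds $\delta_1\leq c_i\leq\delta_2$ together with $d\in L^\infty(\Omega,\mathbb{R}^K)$ are essential: for $|t|$ sufficiently small, $c+td$ takes values in a fixed compact subset of $[0,\infty)^K$, on which the $C^1$ hypothesis on $\epsilon$ supplies a uniform Lipschitz constant $L$. Hence $\|\epsilon(c+td)-\epsilon(c)\|_{L^\infty(\Omega)}\leq L|t|\,\|d\|_{L^\infty}$. Plugging this in and dividing by $\|\nabla w_t\|_{L^2}$ yields $\|\nabla w_t\|_{L^2}\leq C|t|$, and a final application of Poincar\'e bounds $\|w_t\|_{H^1(\Omega)}$ by $C|t|$, from which~\eqref{2.9} follows.

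The main obstacle I anticipate is organizational rather than analytical: one has to notice that the perturbation $w_t$ automatically satisfies a \emph{homogeneous} Dirichlet condition even though both $\phi(c+td)$ and $\phi(c)$ carry the inhomogeneous datum $\phi_0$, and that the principal term must be split exactly as above so that the variable-$\epsilon$ contribution is moved to the right-hand side and treated as a small perturbation against the coercive left-hand side. Once this reduction is in place, the remainder is a standard stability-under-coefficient-perturbation argument driven by uniform ellipticity, $C^1$ regularity of $\epsilon$, and the pointwise $L^\infty$ bounds on $c$ and $d$.
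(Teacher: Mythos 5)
Your argument is correct and is the standard stability-under-coefficient-perturbation proof: subtract the two weak formulations, test with $w_t=\phi(c+td)-\phi(c)\in H^{1}_{s,0}(\Omega)$, and use uniform ellipticity, the local Lipschitz bound on $\epsilon$ coming from $C^1$ regularity and the pointwise bounds $\delta_1\le c_i\le\delta_2$, and the Poincar\'e inequality. The paper does not reproduce a proof (it defers to the cited reference of Li, Wen and Zhou), but the argument there is essentially the same one you give, so there is nothing to add beyond noting that your quantitative conclusion $\|w_t\|_{H^1}\le C|t|$ is in fact slightly stronger than the stated convergence \eqref{2.9}.
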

A proof of this theorem can be found in Ref.~\refcite{LiBo16}, and we will not repeat it here.

Now, we decompose the free energy $F$ as
\begin{equation*}
    F[c] = F_{pot}[c] + F_{entropy}[c],
\end{equation*}
where
\begin{equation}\label{2.10}
    F_{pot}[c] = \int_{\Omega}\frac{1}{2}\rho(c)\phi(c)dV + \int_{\Gamma_{N}} \frac{1}{2}\sigma\phi(c)dS
    - \int_{\Gamma_{D}} \frac{1}{2}\epsilon(c)\frac{\partial \phi(c)}{\partial n}\phi_{0}dS,
\end{equation}
\begin{equation}\label{2.11}
    F_{entropy}[c] = \sum_{i=1}^{K}\int_{\Omega}\{\beta^{-1}c_{i}[\log(\Lambda^{3}c_{i})-1]-\mu_{i}c_{i}\}dV.
\end{equation}
Based on the definition of \eqref{2.8}, we have
\begin{align}\label{2.12}
    \delta F_{entropy}[c][d] &= \lim_{t\rightarrow 0}\frac{F_{entropy}[c+td] - F_{entropy}[c]}{t}\notag\\
    & = \sum_{i=1}^{K}\int_{\Omega}d_{i}[\beta^{-1}\log(\Lambda^{3}c_{i})-\mu_{i}]dV.
\end{align}
We now deal with another term
\begin{align}\label{2.13}
    &\delta F_{pot}[c][d] = \lim_{t\rightarrow 0}\frac{F_{pot}[c+td] - F_{pot}[c]}{t}\notag\\
    & = \lim_{t\rightarrow 0}\frac{1}{2t}[\int_{\Omega}\{ \rho(c+td)\phi(c+td)-\rho(c)\phi(c+td)+\rho(c)\phi(c+td)-\rho(c)\phi(c)\}dV]\notag\\
    & + \lim_{t\rightarrow 0}\frac{1}{2t}\int_{\Gamma_{N}}\sigma[\phi(c+td)-\phi(c)]dS
      - \lim_{t\rightarrow 0}\frac{1}{2t}\int_{\Gamma_{D}}[\epsilon(c+td)\frac{\partial \phi(c+td)}{\partial n}-\epsilon(c)\frac{\partial \phi(c)}{\partial n}]\phi_{0}dS\notag\\
    & = \lim_{t\rightarrow 0}\frac{1}{2}\int_{\Omega}\frac{[\rho(c+td)-\rho(c)]\phi(c+td)}{t}dV + \lim_{t\rightarrow 0}\frac{1}{2}\int_{\Omega}\rho(c)\frac{\phi(c+td)-\phi(c)}{t}dV\notag\\
    & + \lim_{t\rightarrow 0}\frac{1}{2}\int_{\Gamma_{N}}\sigma\frac{\phi(c+td)-\phi(c)}{t}dV
      - \lim_{t\rightarrow 0}\frac{1}{2t}\int_{\Gamma_{D}}[\epsilon(c+td)\frac{\partial \phi(c+td)}{\partial n}-\epsilon(c)\frac{\partial \phi(c)}{\partial n}]\phi_{0}dS.
\end{align}
By Eq.~\eqref{2.2}, we have
\begin{equation*}
    \lim_{t\rightarrow 0}\frac{1}{2}\int_{\Omega}\frac{[\rho(c+td)-\rho(c)]\phi(c+td)}{t}dV
    = \lim_{t\rightarrow 0}\frac{1}{2}\sum_{i=1}^{K}\int_{\Omega}d_{i}q_{i}\phi(c+td)dV,
\end{equation*}
and by theorem 2.1, we have
\begin{equation}\label{2.14}
    \lim_{t\rightarrow 0}\frac{1}{2}\sum_{i=1}^{K}\int_{\Omega}q_{i}d_{i}\phi(c+td)dV
    = \frac{1}{2}\sum_{i=1}^{K}\int_{\Omega}d_{i}q_{i}\phi(c)dV.
\end{equation}
Now we deal with the remaining three terms in \eqref{2.13}, by the weak formulation \eqref{2.7} for $\phi(c)$
with $v = \frac{\phi(c+td)-\phi(c)}{t}\;\in \;H_{s,0}^{1}$,
\begin{align}\label{2.15}
    &\lim_{t\rightarrow 0}\frac{1}{2}\int_{\Omega}\rho(c)\frac{\phi(c+td)-\phi(c)}{t}dV
    + \lim_{t\rightarrow 0}\frac{1}{2}\int_{\Gamma_{N}}\sigma\frac{\phi(c+td)-\phi(c)}{t}dS\notag\\
    & - \lim_{t\rightarrow 0}\frac{1}{t}[\int_{\Gamma_{D}} \frac{1}{2}[\epsilon(c+td)\frac{\partial \phi(c+td)}{\partial n}-\epsilon(c)\frac{\partial \phi(c)}{\partial n}]\phi_{0}dS]\notag\\
    & = \lim_{t\rightarrow 0}\frac{1}{2}\int_{\Omega}\varepsilon(c)\nabla\phi(c)\nabla[\frac{\phi(c+td)-\phi(c)}{t}]dV\notag\\
    & - \lim_{t\rightarrow 0}[\int_{\Gamma_{D}} \frac{1}{2t}[\epsilon(c+td)\frac{\partial \phi(c+td)}{\partial n}-\epsilon(c)\frac{\partial \phi(c)}{\partial n}]\phi_{0}dS].
\end{align}
Based on the Poisson's Eq.~\eqref{2.5}, the following equation holds:
\begin{equation}\label{2.16}
    \int_{\Omega}-\nabla\cdot\epsilon(c)\nabla\phi(c)\phi(c)dV = \int_{\Omega}\rho(c)\phi(c)dV.
\end{equation}
By integrating the left term by parts and using the divergence theorem
\begin{equation}\label{2.17}
    \int_{\Omega}\epsilon(c)\nabla\phi(c)\nabla\phi(c)dV - \int_{\Gamma}\epsilon(c)\frac{\partial \phi(c)}{\partial n}\phi(c)dS= \int_{\Omega}\rho(c)\phi(c)dV.
\end{equation}
If we consider the Poisson's Eq.~\eqref{2.5} at $c+td$, similarly, we have
\begin{equation}\label{2.18}
    \int_{\Omega}\epsilon(c+td)\nabla\phi(c+td)\nabla\phi(c)dV - \int_{\Gamma}\epsilon(c+td)\frac{\partial \phi(c+td)}{\partial n}\phi(c)dS= \int_{\Omega}\rho(c+td)\phi(c)dV.
\end{equation}
If $\epsilon$ is constant, Eq.~\eqref{2.17} and Eq.~\eqref{2.18} lead to:
\begin{align*}
    &\int_{\Omega}\epsilon\nabla(\phi(c+td)-\phi(c))\nabla\phi(c)dV - \int_{\Gamma}\epsilon(\frac{\partial \phi(c+td)}{\partial n}-\frac{\partial \phi(c)}{\partial n})\phi(c)dS\\
    &= \int_{\Omega}(\rho(c+td)-\rho(c))\phi(c)dV.
\end{align*}
As the boundary $\Gamma$ of $\Omega$ is divided into two parts $\Gamma = \Gamma_{N} + \Gamma_{D}$, then
\begin{align*}
    &\int_{\Omega}\epsilon\nabla(\phi(c+td)-\phi(c))\nabla\phi(c)dV\\
    &- \int_{\Gamma_{N}}\epsilon(\frac{\partial \phi(c+td)}{\partial n}-\frac{\partial \phi(c)}{\partial n})\phi(c)dS
    - \int_{\Gamma_{D}}\epsilon(\frac{\partial \phi(c+td)}{\partial n}-\frac{\partial \phi(c)}{\partial n})\phi_{0}dS\\
    &= \int_{\Omega}\epsilon\nabla(\phi(c+td)-\phi(c))\nabla\phi(c)dV
    - \int_{\Gamma_{D}}\epsilon(\frac{\partial \phi(c+td)}{\partial n}-\frac{\partial \phi(c)}{\partial n})\phi_{0}dS\\
    &= \sum_{i=1}^{K}\int_{\Omega}td_{i}q_{i}\phi(c)dV.
\end{align*}
Take this equation into Eq.~\eqref{2.15}, then
\begin{align}\label{2.19}
    &\lim_{t\rightarrow 0}[\frac{1}{2}\int_{\Omega}\varepsilon\nabla\phi\nabla[\frac{\phi(c+td)-\phi(c)}{t}]dV]
    - \lim_{t\rightarrow 0}[\int_{\Gamma_{D}} \frac{1}{2t}[\epsilon\frac{\partial \phi(c+td)}{\partial n}-\epsilon\frac{\partial \phi(c)}{\partial n}]\phi_{0}dS]\notag\\
    &= \frac{1}{2}\sum_{i=1}^{K}\int_{\Omega}d_{i}q_{i}\phi(c)dV.
\end{align}
Combine Eqs.~\eqref{2.12},~\eqref{2.14} and \eqref{2.19}, when $\epsilon$ is constant we finally have
\begin{align*}
    &\delta F[c][d]  = \delta F_{entropy}[c][d] + \delta F_{pot}[c][d]\\
    & = \sum_{i=1}^{K}\int_{\Omega}d_{i}[\beta^{-1}\log(\Lambda^{3}c_{i})-\mu_{i}]dV
    + \frac{1}{2}\sum_{i=1}^{K}\int_{\Omega}d_{i}q_{i}\phi(c)dV
    + \frac{1}{2}\sum_{i=1}^{K}\int_{\Omega}d_{i}q_{i}\phi(c)dV\\
    & = \sum_{i=1}^{K}\int_{\Omega}d_{i}\{q_{i}\phi(c) + \beta^{-1}\log(\Lambda^{3}c_{i}) - \mu_{i}\}dV.\;
\end{align*}
If $\epsilon(c)$ is a function of $c$, we can deduce the equation below from Eq.~\eqref{2.17} and Eq.~\eqref{2.18}:
\begin{align*}
    &\int_{\Omega}[\epsilon(c+td)\nabla\phi(c+td)-\epsilon(c)\nabla\phi(c)]\nabla\phi(c)dV
    - \int_{\Gamma}[\epsilon(c+td)\frac{\partial \phi(c+td)}{\partial n}-\epsilon(c)\frac{\partial \phi(c)}{\partial n}]\phi(c)dS\\
    &= \int_{\Omega}[(\epsilon(c+td)-\epsilon(c))\nabla\phi(c+td)+\epsilon(c)(\nabla\phi(c+td)-\nabla\phi(c))]\nabla\phi(c)dV\\
    &- \int_{\Gamma_N+\Gamma_D}[\epsilon(c+td)\frac{\partial \phi(c+td)}{\partial n}-\epsilon(c)\frac{\partial \phi(c)}{\partial n}]\phi(c)dS\\
    &= \sum_{i=1}^{K}\int_{\Omega}[(td_{i}\epsilon^{'}(c) + o(t))\nabla\phi(c+td)\nabla\phi(c)]dV
    + \int_{\Omega}[\epsilon(c)(\nabla\phi(c+td)-\nabla\phi(c))]\nabla\phi(c)dV\\
    & -\int_{\Gamma_D}[\epsilon(c+td)\frac{\partial \phi(c+td)}{\partial n}-\epsilon(c)\frac{\partial \phi(c)}{\partial n}]\phi_0dS\\
    &= \sum_{i=1}^{K}\int_{\Omega}td_{i}q_{i}\phi(c)dV,
\end{align*}
 where we denote $\epsilon^{'}(c)$ as $\frac{\partial \epsilon(c)}{\partial c_i}$, and take above equation into Eq.~\eqref{2.15}
\begin{align}\label{2.20}
    &\lim_{t\rightarrow 0}\frac{1}{2}\int_{\Omega}\epsilon(c)\nabla\phi\nabla[\frac{\phi(c+td)-\phi(c)}{t}]dV
    - \lim_{t\rightarrow 0}[\int_{\Gamma_{D}} \frac{1}{2t}[\epsilon(c+td)\frac{\partial \phi(c+td)}{\partial n}-\epsilon(c)\frac{\partial \phi(c)}{\partial n}]\phi_{0}dS]\notag\\
    &=- \lim_{t\rightarrow 0}\int_{\Omega}\frac{1}{2}\sum_{i=1}^{K}d_{i}\epsilon^{'}(c)\nabla\phi(c+td)\nabla\phi(c)dV
     + \frac{1}{2}\sum_{i=1}^{K}\int_{\Omega}d_{i}q_{i}\phi(c)dV\notag\\
    &+ \lim_{t\rightarrow 0}\frac{1}{2t}\int_{\Gamma_{D}}[\epsilon(c+td)\frac{\partial \phi(c+td)}{\partial n}-\epsilon(c)\frac{\partial \phi(c)}{\partial n}]\phi(c)dS\notag\\
    &- \lim_{t\rightarrow 0}[\int_{\Gamma_{D}} \frac{1}{2t}[\epsilon(c+td)\frac{\partial \phi(c+td)}{\partial n}-\epsilon(c)\frac{\partial \phi(c)}{\partial n}]\phi_{0}dS]\notag\\
    &= \frac{1}{2}\sum_{i=1}^{K}\int_{\Omega}d_{i}q_{i}\phi(c)dV
     - \frac{1}{2}\sum_{i=1}^{K}\int_{\Omega}d_{i}\epsilon^{'}(c)\nabla\phi(c)\nabla\phi(c)dV.
\end{align}
Combine Eqs.~\eqref{2.12},~\eqref{2.14} and \eqref{2.20}, we finally have
\begin{align*}
    \delta F[c][d] & = \delta F_{entropy}[c][d] + \delta F_{pot}[c][d]\\
    & = \sum_{i=1}^{K}\int_{\Omega}d_{i}[\beta^{-1}\log(\Lambda^{3}c_{i})-\mu_{i}]dV
    + \frac{1}{2}\sum_{i=1}^{K}\int_{\Omega}q_{i}d_{i}\phi(c)dV
    + \frac{1}{2}\sum_{i=1}^{K}\int_{\Omega}d_{i}q_{i}\phi(c)dV\\
    &-\int_{\Omega}\frac{1}{2}\sum_{i=1}^{K}d_{i}\epsilon^{'}(c)\nabla\phi(c)\nabla\phi(c)dV\\
    & = \sum_{i=1}^{K}\int_{\Omega}d_{i}\{q_{i}\phi(c) + \beta^{-1}\log(\Lambda^{3}c_{i}) - \mu_{i}-\frac{1}{2}\epsilon^{'}(c)\nabla\phi(c)\nabla\phi(c)\}dV.\;
\end{align*}
In the case of inhomogeneous dielectric coefficient based on these discussions, we can prove the following theorem.
\begin{theorem}
Let $c = (c_{1},\cdots,c_{K}) \in X$. Assume there exist positive numbers $\delta_{1}$ and $\delta_{2}$ such that
$\delta_{1} \leq c_{i}(x) \leq \delta_{2}$ for a.e. $x\in\Omega$ and $i=1, \cdots, K$. Assume also that $d = (d_{1},\cdots,d_{K}) \in L^{\infty}(\Omega, R^{K})$. If we consider the complete free energy functional as given in Eq.~\eqref{2.4}, then
\begin{equation}\label{2.21}
    \delta F[c][d] = \sum_{i=1}^{K}\int_{\Omega}d_{i}\{q_{i}\phi(c)- \frac{1}{2}\epsilon^{'}(c)\nabla\phi(c)\nabla\phi(c) + \beta^{-1}\log(\Lambda^{3}c_{i}) - \mu_{i}\}dV.
\end{equation}
Particularly, if $\epsilon$ doesn't depend on $c$, then
\begin{equation}\label{2.22}
    \delta F[c][d] = \sum_{i=1}^{K}\int_{\Omega}d_{i}\{q_{i}\phi(c) + \beta^{-1}\log(\Lambda^{3}c_{i}) - \mu_{i}\}dV.
\end{equation}
\end{theorem}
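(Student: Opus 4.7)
The plan is to split $F[c] = F_{pot}[c] + F_{entropy}[c]$ exactly as in Eqs.~\eqref{2.10}--\eqref{2.11}, compute each first variation in turn, and then collect. For the entropy piece, the hypotheses $\delta_{1} \le c_{i} \le \delta_{2}$ uniformly and $d \in L^{\infty}(\Omega,\mathbb{R}^{K})$ keep the integrand smooth in $c_{i}$ on a tube bounded away from zero, so I would simply differentiate under the integral sign (justified by dominated convergence) to obtain Eq.~\eqref{2.12} with essentially no work. All of the difficulty lies in $\delta F_{pot}[c][d]$.

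For $F_{pot}$, the strategy is to telescope the electrostatic volume term as $\rho(c+td)\phi(c+td) - \rho(c)\phi(c) = [\rho(c+td)-\rho(c)]\phi(c+td) + \rho(c)[\phi(c+td)-\phi(c)]$, which splits the difference quotient into four pieces: two volume integrals plus the Neumann and Dirichlet surface contributions. The first volume piece converges to $\tfrac{1}{2}\sum_{i}\int_{\Omega} d_{i}q_{i}\phi(c)\,dV$ directly, using the linear dependence of $\rho$ on $c$ and Theorem~2.1 for the convergence of $\phi(c+td)$. The second volume piece together with the Neumann surface piece combine, via the weak formulation \eqref{2.7} tested against $v_{t} = [\phi(c+td)-\phi(c)]/t \in H^{1}_{s,0}(\Omega)$, into the bilinear form $\tfrac{1}{2} a(\phi(c), v_{t})$.

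The crucial step is then to compare the Poisson equation at $c$ and at $c+td$: multiplying the strong forms by $\phi(c)$, integrating by parts, and subtracting yields an identity (on $\Gamma = \Gamma_{N} + \Gamma_{D}$, with the Neumann contributions cancelling pointwise and $\phi(c)|_{\Gamma_{D}} = \phi_{0}$) that rewrites $\tfrac{1}{2} a(\phi(c), v_{t})$ modulo precisely the Dirichlet surface term of $F_{pot}$. Expanding $\epsilon(c+td) - \epsilon(c) = t \sum_{i} d_{i} \epsilon'(c) + o(t)$ via the $C^{1}$ hypothesis on $\epsilon$ then isolates the $-\tfrac{1}{2}\epsilon'(c)|\nabla\phi(c)|^{2}$ contribution. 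This cancellation is exactly what forces the Dirichlet surface term to appear in Eq.~\eqref{2.4}: without it an unwanted $\phi(c)$-valued boundary integral on $\Gamma_{D}$ would survive, producing the screwed distribution warned against in the introduction.

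The main obstacle I anticipate is passing to the limit inside $\int_{\Omega} d_{i}\epsilon'(c)\nabla\phi(c+td)\cdot\nabla\phi(c)\,dV$. This requires $\nabla\phi(c+td) \to \nabla\phi(c)$ in $L^{2}(\Omega)$, which is precisely what Theorem~2.1 delivers, combined with $\epsilon'(c) \in L^{\infty}(\Omega)$ (from $\epsilon \in C^{1}([0,\infty))$ and assumption (i)) and $d_{i} \in L^{\infty}$ to invoke Cauchy--Schwarz. Once this limit is in hand, assembling Eq.~\eqref{2.12} with the $F_{pot}$ computation yields Eq.~\eqref{2.21}; specializing to constant $\epsilon$ sets $\epsilon'(c) \equiv 0$ and immediately recovers Eq.~\eqref{2.22}.
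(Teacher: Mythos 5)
Your proposal follows the paper's own argument essentially step for step: the same splitting into $F_{pot}+F_{entropy}$, the same telescoping of $\rho\phi$, the same use of the weak formulation \eqref{2.7} tested against $[\phi(c+td)-\phi(c)]/t$, the same comparison of the Poisson problem at $c$ and $c+td$ multiplied by $\phi(c)$ (with the Neumann terms cancelling and $\phi(c)=\phi_0$ on $\Gamma_D$ absorbing the Dirichlet surface term), and the same expansion $\epsilon(c+td)-\epsilon(c)=t\sum_i d_i\epsilon'(c)+o(t)$ together with the $H^1$ convergence of Theorem~2.1 to pass to the limit. This is correct and matches Eqs.~\eqref{2.13}--\eqref{2.20} of the paper, so nothing further is needed.
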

\subsubsection{Comparison with result from the incomplete energy form}
To compare with result from the incomplete energy form, we use the energetic variational approach to
the incomplete free energy functional \eqref{2.3} rather than \eqref{2.4} in a finite domain (or semi-finite domain as well), and theoretical analysis will give essentially different results. An extra surface integral occurs in the first variations $\delta F[c][d]$ despite of the dependency of the dielectric coefficient on ionic concentrations:
\begin{align}\label{2.23}
    \delta F[c][d]
    & = \sum_{i=1}^{K}\int_{\Omega}d_{i}\{q_{i}\phi(c) + \beta^{-1}\log(\Lambda^{3}c_{i}) - \mu_{i}-\frac{1}{2}\epsilon^{'}(c)\nabla\phi(c)\nabla\phi(c)\}dV\notag\\
    &+\lim_{t\rightarrow 0}\frac{1}{2t}\int_{\Gamma_{D}}[\epsilon(c+td)\frac{\partial \phi(c+td)}{\partial n}-\epsilon(c)\frac{\partial \phi(c)}{\partial n}]\phi_{0}dS.\;
\end{align}
The boundary integration term is introduced by the non-zero Dirichlet boundary condition. A general method to eliminate this effect is to introduce a corresponding boundary-value problem of Poisson's equation as shown in Li et al.'s work \cite{LiBo16}
\begin{align}\label{2.24}
  \nabla\cdot(\epsilon(c)\nabla\phi_{D}(c)) &= 0     & in \quad \Omega,\\
  \epsilon(c)\frac{\partial\phi_{D}}{\partial n} & = 0& on \quad \Gamma_{N},\notag\\
  \phi_{D} & = \phi_{0} & on \quad \Gamma_{D}.\notag
\end{align}
Similarly, the weak form of Eq.~\eqref{2.24} is
\begin{equation}\label{2.25}
    \int_{\Omega} \epsilon(c)\nabla\phi_{D}(c) \cdot\nabla v dV = 0 \quad \forall v \in H_{s,0}^{1}(\Omega).
\end{equation}
The boundary-value problem of Poisson equation Eq.~\eqref{2.24} has a unique weak solution $\phi_{D} = \phi_{D}(c)$ and only in the special case of zero boundary condition $\phi_{0}=0$, the introduced $\phi_D$ vanishes $\phi_{D}=0$.
\begin{theorem}\cite{LiBo09a,LiBo09b,LiBo16}
Let $c = (c_{1},\cdots,c_{K}) \in X$. Assume there exists positive numbers $\delta_{1}$ and $\delta_{2}$ such that
$\delta_{1} \leq c_{i}(x) \leq \delta_{2}$ for a.e. $x\in\Omega$ and $i=1, \cdots, K$. Assume also that $d = (d_{1},\cdots,d_{K}) \in L^{\infty}(\Omega, R^{K})$. If we consider the incomplete free energy functional as \eqref{2.3}, then $$\delta F[c][d] = \sum_{i=1}^{K}\int_{\Omega}d_{i}\delta_{i}F[c]dV,$$
where for each $i(1 \leq i \leq K)$ the function $\delta_{i}F[c]: \Omega \rightarrow R$ is given by:
\begin{equation}\label{2.26}
    \delta_{i}F[c] = q_{i}[\phi(c)-\frac{1}{2}\phi_{D}(c)]- \frac{1}{2}\epsilon^{'}(c)\nabla\phi(c)\cdot \nabla[\phi(c)-\phi_{D}(c)] + \beta^{-1}\log(\Lambda^{3}c_{i}) - \mu_{i}.
\end{equation}
\end{theorem}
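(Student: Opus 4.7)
The plan is to begin from Eq.~\eqref{2.23}, which already contains the correct volume part of the variation; all that remains is to rewrite the leftover Dirichlet surface integral
$$I \;=\; \lim_{t\to 0}\frac{1}{2t}\int_{\Gamma_{D}}\Bigl[\epsilon(c+td)\tfrac{\partial \phi(c+td)}{\partial n}-\epsilon(c)\tfrac{\partial \phi(c)}{\partial n}\Bigr]\phi_{0}\,dS$$
as a volume term involving the auxiliary potential $\phi_{D}(c)$ from \eqref{2.24}. First I would exploit that $\phi_{0}=\phi_{D}(c)$ on $\Gamma_{D}$, and that on $\Gamma_{N}$ the Neumann datum matches, $\epsilon(c+td)\partial_{n}\phi(c+td)=\sigma=\epsilon(c)\partial_{n}\phi(c)$, so that extending the integration from $\Gamma_{D}$ to all of $\Gamma=\Gamma_{N}\cup\Gamma_{D}$ is free of charge and replaces $\phi_{0}$ by $\phi_{D}(c)$.

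Next I would apply the divergence theorem to each piece, namely
$$\int_{\Gamma}\epsilon(c+td)\tfrac{\partial\phi(c+td)}{\partial n}\phi_{D}(c)\,dS=\int_{\Omega}\bigl[-\phi_{D}(c)\rho(c+td)+\epsilon(c+td)\nabla\phi_{D}(c)\cdot\nabla\phi(c+td)\bigr]dV,$$
and the analogous identity with $c$ in place of $c+td$, using the Poisson equation \eqref{2.5}. Subtracting and dividing by $2t$ yields three pieces: a charge piece, an $\epsilon'$ piece, and a $\nabla\phi_D\!\cdot\!\nabla\bigl[\phi(c+td)-\phi(c)\bigr]/t$ piece. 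The last of these vanishes identically because $[\phi(c+td)-\phi(c)]/t\in H^{1}_{s,0}(\Omega)$ (the two potentials share the Dirichlet datum $\phi_{0}$) and $\phi_{D}(c)$ satisfies the weak form \eqref{2.25}. The charge piece gives $-\tfrac{1}{2}\sum_{i}\int_{\Omega} d_{i}q_{i}\phi_{D}(c)\,dV$ directly from \eqref{2.2}, and the $\epsilon'$ piece converges to $\tfrac{1}{2}\sum_{i}\int_{\Omega}d_{i}\epsilon'(c)\nabla\phi_{D}(c)\cdot\nabla\phi(c)\,dV$ once one invokes Theorem 2.1 to pass $\nabla\phi(c+td)\to\nabla\phi(c)$ in $L^{2}(\Omega)$.

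Adding these two limits to the volume contribution already present in \eqref{2.23} collects the integrand of \eqref{2.26}: the $q_{i}\phi(c)$ and $-\tfrac{1}{2}q_{i}\phi_{D}(c)$ combine into $q_{i}[\phi(c)-\tfrac{1}{2}\phi_{D}(c)]$, while $-\tfrac{1}{2}\epsilon'(c)|\nabla\phi(c)|^{2}$ and $+\tfrac{1}{2}\epsilon'(c)\nabla\phi_{D}(c)\cdot\nabla\phi(c)$ combine into $-\tfrac{1}{2}\epsilon'(c)\nabla\phi(c)\cdot\nabla[\phi(c)-\phi_{D}(c)]$, with the entropy and chemical-potential terms unchanged.

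The main obstacle is the limit passage inside the surface-integral-turned-volume-integral: $\nabla\phi(c+td)$ is only known to converge in $L^{2}$ (by Theorem 2.1), while the coefficient $[\epsilon(c+td)-\epsilon(c)]/t$ converges a.e. to $\sum_{i} d_{i}\epsilon'(c)$ and is bounded by $\|\epsilon'\|_{L^{\infty}}\|d\|_{L^{\infty}}$ via assumption (i) on $\epsilon$ and the $L^{\infty}$ bounds on $d$. A dominated-convergence argument, together with uniform bounds on $\|\nabla\phi(c+td)\|_{L^{2}}$ obtained from testing \eqref{2.7} against $\phi(c+td)-\phi_{D}(c)$, justifies exchanging the limit with the integration. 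Everything else reduces to the Green-identity bookkeeping above and to the observation that the Neumann boundary contribution cancels, which is exactly why the surface term could be closed off over the full boundary $\Gamma$.
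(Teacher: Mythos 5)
Your plan is correct and reaches exactly \eqref{2.26}, but it is organized differently from the paper's proof. The paper never isolates the residual Dirichlet surface term as a standalone object: it goes back to \eqref{2.15}, uses the weak form \eqref{2.25} of $\phi_{D}$ to replace $\nabla\phi(c)$ by $\nabla[\phi(c)-\phi_{D}(c)]$ inside the bilinear form, and then tests the Poisson problems at $c$ and $c+td$ against the admissible function $\phi(c)-\phi_{D}(c)\in H^{1}_{s,0}$, so that all boundary contributions are absorbed into the weak formulation from the start (Eq.~\eqref{2.27}). You instead take the already-established decomposition \eqref{2.23} (complete variation plus a leftover surface integral $I$ over $\Gamma_{D}$), extend $I$ to the full boundary using the matching Neumann data and $\phi_{0}=\phi_{D}(c)$ on $\Gamma_{D}$, and convert it to a volume integral against $\phi_{D}(c)$ by Green's identity; the weak form \eqref{2.25} then kills the $\nabla\phi_{D}(c)\cdot\nabla[\phi(c+td)-\phi(c)]/t$ piece and the remaining charge and $\epsilon'$ pieces recombine into \eqref{2.26}. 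The two routes use the same three ingredients (the auxiliary problem \eqref{2.24}--\eqref{2.25}, the Poisson weak form, and Theorem 2.1 for the limit passage), so the substance is identical; yours is more modular in that it reuses Theorem 2.2 and only has to evaluate the correction $I$, at the price of manipulating the boundary trace of $\epsilon\,\partial\phi/\partial n$ in the strong (Green-identity) sense — which is the same level of rigor the paper itself adopts in \eqref{2.16}--\eqref{2.18} — while the paper's version stays entirely inside the weak formulation. Your added remarks on dominated convergence and the uniform $H^{1}$ bound for $\nabla\phi(c+td)$ are sound and in fact more careful than the paper's treatment of the same limit.
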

\begin{proof}
Based on Eq.~\eqref{2.15} and by the weak formulation in Eq.~\eqref{2.7} for $\phi(c+td)$ and $\phi(c)$, and the weak formulation in Eq.~\eqref{2.25} for $\phi_{D}$
with $v = \frac{\phi(c+td)-\phi(c)}{t}\;\in \;H_{s,0}^{1}$ and $v = \phi(c)-\phi_{D}(c)\;\in \;H_{s,0}^{1}$,
\begin{align}\label{2.27}
    &\lim_{t\rightarrow 0}\frac{1}{2}\int_{\Omega}\epsilon(c)\nabla\phi(c)\nabla[\frac{\phi(c+td)-\phi(c)}{t}]dV\notag\\
    & = \lim_{t\rightarrow 0}\frac{1}{2}\int_{\Omega}\epsilon(c)\nabla[\phi(c)-\phi_{D}(c)]\nabla[\frac{\phi(c+td)-\phi(c)}{t}]dV\notag\\
    & = \lim_{t\rightarrow 0}[\frac{1}{2t}\int_{\Omega}(\epsilon(c)-\epsilon(c+td))\nabla[\phi(c)-\phi_{D}(c)]\nabla\phi(c+td)dV]\notag\\
    & + \lim_{t\rightarrow 0}[\frac{1}{2t}\{\rho(c+td)[\phi(c)-\phi_{D}(c)] dV
    + \int_{\Gamma_{D}}\epsilon(c+td)\frac{\partial\phi(c+td)}{\partial n}[\phi(c)-\phi_{D}(c)] dS\}]\notag\\
    & - \lim_{t\rightarrow 0}[\frac{1}{2t}\{\rho(c)[\phi(c)-\phi_{D}(c)] dV
    + \int_{\Gamma_{D}}\epsilon(c)\frac{\partial\phi}{\partial n}[\phi(c)-\phi_{D}(c)] dS\}]\notag\\
    & = -\frac{1}{2}\sum_{i=1}^{K}\int_{\Omega}d_{i}\epsilon^{'}(c)\nabla[\phi(c)-\phi_{D}(c)]\nabla\phi(c) dV
    + \frac{1}{2}\sum_{i=1}^{K}\int_{\Omega}d_{i}q_{i}[\phi(c)-\phi_{D}(c)] dV.
\end{align}
Combine Eqs.~\eqref{2.12},~\eqref{2.14} and \eqref{2.27}, we have
\begin{align*}
    \delta F[c][d] & = \delta F_{entropy}[c][d] + \delta F_{pot}[c][d]\\
    & = \sum_{i=1}^{K}\int_{\Omega}d_{i}\{q_{i}[\phi(c)-\frac{1}{2}\phi_{D}(c)] + \beta^{-1}\log(\Lambda^{3}c_{i}) - \mu_{i}\}dV\\
    & - \sum_{i=1}^{K}\int_{\Omega}d_{i}\frac{1}{2}\epsilon^{'}(c)\nabla\phi(c)\cdot \nabla[\phi(c)-\phi_{D}(c)]dV.\;
\end{align*}
\end{proof}
This will lead to "screwed" PB and PNP models and obtain incorrect results in physics. In the next two subsections, we will derive the generalized PB/PNP equations and give detailed discussion.
\subsection{Generalized boltzmann distributions with different boundary conditions}
Based on the complete free energy functional \eqref{2.4} and theorem 2.2, the electrostatic free energy $F = F(c)$ is minimized when $c = (c_{1},\cdots,c_{K}) \in X$ satisfies
$\delta F[c][d] = 0, \quad \forall d = (d_{1},\cdots,d_{K}) \in X$, which means
\begin{equation*}
    q_{i}\phi(c)- \frac{1}{2}\epsilon^{'}(c)\nabla\phi(c)\nabla\phi(c) + \beta^{-1}\log(\Lambda^{3}c_{i}) - \mu_{i} = 0.
\end{equation*}
$\Rightarrow$
\begin{align}\label{2.28}
    c_{i} &= \Lambda^{-3}e^{\beta\mu_{i}}\exp\{-\beta q_{i}\phi(c)+ \frac{\beta}{2}\epsilon^{'}(c)\nabla\phi(c) \nabla\phi(c)\}\notag\\
          &= c_{i}^{\infty}\exp\{-\beta q_{i}\phi(c)+ \frac{\beta}{2}\epsilon^{'}(c)\nabla\phi(c) \nabla\phi(c)\},
\end{align}
where $c_{i} \rightarrow c_{i}^{\infty}$ as $r \rightarrow \infty$ and $\phi \rightarrow 0$.
We call these the generalized Boltzmann distributions, as they generalize the classical Boltzmann distributions
$c_{i} = c_{i}^{\infty}e^{-\beta q_{i}\phi}(i=1, \cdots, K)$ when $\epsilon$ does not depend on $c$ (no matter what the boundary conditions are).

However, if we start from the incomplete free energy functional \eqref{2.1} in a finite domain (or similarly for semi-finite domain) with non-zero Neumann/Dirichlet boundary conditions, $\delta F[c][d]$ takes the form,
\begin{align*}
    \delta F[c][d]  &= \sum_{i=1}^{K}\int_{\Omega}d_{i}\{q_{i}\phi(c) + \beta^{-1}\log(\Lambda^{3}c_{i}) - \mu_{i}\}dV\\
    & - \lim_{t\rightarrow 0}[\int_{\Gamma_{N}} \frac{1}{2}\sigma\frac{\phi(c+td)-\phi(c)}{t}dS]\\
    & + \lim_{t\rightarrow 0}[\int_{\Gamma_{D}} \frac{1}{2t}[\epsilon\frac{\partial \phi(c+td)}{\partial n}-\epsilon\frac{\partial \phi(c)}{\partial n}]\phi_{0}dS].
\end{align*}
Then we cannot obtain a generalized Boltzmann distribution.
Based on theorem 2.3 and minimize the incomplete energy functional \eqref{2.3}, a screwed Boltzmann distribution can be derived.

Here we give an example to quantify the difference of these two distributions. If $\epsilon$ does not depend on $c$, the generalized Boltzmann distributions \eqref{2.28} are exactly the same as the classical Boltzmann distributions
\begin{equation*}
    c_{i} = c_{i}^{\infty}e^{-\beta q_{i}\phi},
\end{equation*}
and the "screwed" (non-physical) Boltzmann distributions take the form,
\begin{equation}\label{2.29}
    c_{i} = c_{i}^{\infty}\exp\{-\beta q_{i}(\phi(c)-\frac{1}{2}\phi_{D}(c))\}.
\end{equation}
In this example, we design a virtual (ideal) numerical experiment. Considering a charged sphere in an infinite ionic solution, the bulk concentration ($r \rightarrow \infty$) is $c_{i}^{\infty} = 0.1M$ and when $r \rightarrow \infty$, $\phi \rightarrow 0$. In numerical calculation,
the computational domain is finite, we set $\phi = \phi_{D}$ as the Dirichlet boundary condition on an imaginary
 spherical boundary at distance $r = R$. Supposing $\phi_D$ is the real value (depending on the charged sphere and ionic strength)
 of the real system, the numerical solution should match the realistic potential and concentration distributions.
But apparently at $r = R$ (at the boundary) the above two Boltzmann distributions lead to discrepancy in concentration predictions,
one is $c_{i}^{\infty}e^{-\beta q_{i}\phi_D}$, one is $c_{i}^{\infty}e^{-{1 \over 2}\beta q_{i}\phi_D}$. Fig. 2 draws the difference as
a function of $\phi_D$. It is notable that the gap between the two concentration predictions at the boundary becomes larger with the increase of applied potentials. When the fixed potential is a positive, the "screwed" Boltzmann distributions lead to lower concentrations for anions, and higher concentration for cations.
For negative boundary potential $\phi_D$, the opposite phenomenon occurs. When the fixed potential is zero, the distributions reduce to the same Boltzmann distribution.
\begin{figure}[pb]
\centerline{\psfig{file=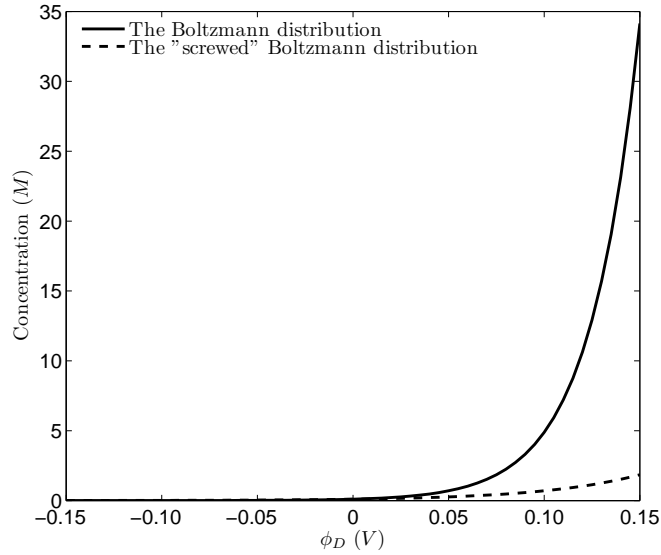,width=4in}}
\vspace*{8pt}
\caption{The traditional (solid line) and "screwed" (dashed line) Boltzmann distributions at the Dirichlet boundary as a function of the boundary value $\phi_D$ and bulk concentration 0.1M.}
\end{figure}
An alternative example can also be designed as a "semi-open" electrolyte solution system which has a Dirichlet BD ($\phi=\phi_D$) at a "finite" part of the boundary, and has a zero boundary condition at the infinite boundary  ($\phi \rightarrow 0$, $c_{i} \rightarrow c_{i}^{\infty}$ as $r \rightarrow \infty$). Similarly as above example, on the finite boundary ($\phi = \phi_{D}$), the generalized Boltzmann distribution is exactly the classical Boltzmann distributions $c_{i} = c_{i}^{\infty}e^{-\beta q_{i}\phi_{D}}$, while the screwed Boltzmann distributions $c_{i} = c_{i}^{\infty}e^{-\frac{1}{2}\beta q_{i}\phi_{D}}$ lead to wrong results.
\subsection{Generalized PNP equations with concentration-dependent $\epsilon(c)$ and different boundary conditions}
Ionic diffusion in electrolyte solution is an electro-diffusion process that is influenced by the electric field generated by the ion distribution itself, biomolecule(s) (if existed) and the environment.
The Poisson-Nernst-Planck equations coupling the electric potential and ion
concentration distributions provide an ideal model for describing this process.\cite{Eisenberg93,Bzlu10} The PNP equations have been widely
used to study the ion channels, nanopores, fuel cells and other research areas.\cite{Eisenberg93,Bob98,Cardenas00,Dan09,Bzlu11,Siwy07,WGWei12}
The continuum PNP equations can be derived via different routes. They can be obtained from the microscopic model of
Langevin trajectories in the limit of large damping and absence of correlations of different ionic trajectories,
\cite{Schuss01,Nadler04} or from the variations of the free energy functional that includes the electrostatic
free energy and the ideal component of the chemical potential (Eq.~\eqref{1.2}).\cite{Dirk02} As aforementioned, the previous variational method can only ensure consistency between the energy form and the PNP equations for vanished boundary conditions for electric potential $\phi$ such as for infinite domain because they did not include the boundary interaction terms. In addition, an inhomogeneously concentration-dependent dielectric property causes wide research interest recently.\cite{Hasted48,HO78,Nortemann97,Berk06,Hlli14,LiBo16} But little previous study is found to give a consistent dynamic model (such as PNP) for electrolyte solution when the dielectric coefficient is ionic concentration-dependent. This is also to be studied in current subsection.

We start from the free energy functional given by Eq.~\eqref{2.4} with generic Dirichlet and Neumann BDs.
According to the constitutive relations, the flux $J_{i}$ and the electrochemical potential $\mu_{i}$ of the $i$th species satisfy
\begin{equation*}
    J_{i} = -m_{i}c_{i}\nabla\mu_{i},
\end{equation*}
here $m_{i}$ is the ion mobility that relates to its diffusivity $D_{i}$ through Einsteins
relation $D_{i} = \beta^{-1}m_{i}$, $\mu_{i}$ is the variation of $F$ with respect to $c_{i}$:
\begin{equation*}
    \mu_{i} = \frac{\delta F}{\delta c_{i}}.
\end{equation*}
Then the following transport equations are obtained from the mass and current conservation law:
\begin{align*}
    \frac{\partial c_{i}}{\partial t}
&= -\nabla \cdot J_{i}\\
&= \nabla \cdot (\beta D_{i}c_{i}\nabla\mu_{i})\\
&=\nabla \cdot (\beta D_{i}c_{i}\nabla\{q_{i}\phi(c)- \frac{1}{2}\epsilon^{'}(c)\nabla\phi(c)\cdot \nabla\phi(c) + \beta^{-1}\log(\Lambda^{3}c_{i})\})\\
&=\nabla \cdot (\beta D_{i}c_{i}(\frac{\nabla c_{i}}{\beta c_{i}}+\nabla(q_{i}\phi(c)- \frac{1}{2}\epsilon^{'}(c)\nabla\phi(c)\cdot \nabla\phi(c))))\\
&=\nabla \cdot (D_{i}[\nabla c_{i}+\beta c_{i}\nabla(q_{i}\phi(c)- \frac{1}{2}\epsilon^{'}(c)\nabla\phi(c)\cdot \nabla\phi(c))]).
\end{align*}
Now we get a set of generalized self-consistent PNP equations:
\begin{equation}\label{2.30}
    -\nabla\cdot\epsilon(c)\nabla\phi(c) = \rho^{f} + \lambda\sum_{i=1}^{K}q_{i}c_{i}, \quad in \;\Omega,
\end{equation}
\begin{equation}\label{2.31}
    \frac{\partial c_{i}}{\partial t}=\nabla \cdot (D_{i}[\nabla c_{i}+\beta c_{i}\nabla(q_{i}\phi(c)- \frac{1}{2}\epsilon^{'}(c)\nabla\phi(c)\cdot \nabla\phi(c))]), \quad in \;\Omega_{s}, i=1, 2, \cdots, K.
\end{equation}
\begin{align*}
\epsilon(c)\frac{\partial\phi}{\partial n} &= \sigma \quad &on \;\Gamma_{N},\\
\phi &= \phi_{0} \quad &on \;\Gamma_{D},\\
c_{i} &= c_{i}^{b} \quad &on \;\Gamma_{D},\\
J_{i}\cdot n &= 0 \quad &on \;\Gamma_{m}.
\end{align*}
If the dielectric coefficient does not depend on local ionic concentrations, Eqs.~\eqref{2.30}-\eqref{2.31} will reduce to the traditional PNP equations.
\begin{equation}\label{2.32}
    -\nabla\cdot\epsilon\nabla\phi(c) = \rho^{f} + \lambda\sum_{i=1}^{K}q_{i}c_{i}, \quad in \;\Omega,
\end{equation}
\begin{equation}\label{2.33}
    \frac{\partial c_{i}}{\partial t}=\nabla \cdot (D_{i}[\nabla c_{i}+\beta c_{i}\nabla(q_{i}\phi(c))]),
    \quad in \;\Omega_{s}, i=1, 2, \cdots, K.
\end{equation}
\begin{align*}
\epsilon\frac{\partial\phi}{\partial n} &= \sigma \quad &on \;\Gamma_{N},\\
\phi &= \phi_{0} \quad &on \;\Gamma_{D},\\
c_{i} &= c_{i}^{b} \quad &on \;\Gamma_{D},\\
J_{i}\cdot n &= 0 \quad &on \;\Gamma_{m}.
\end{align*}
However, for simplicity, if $\epsilon$ does not depend on $c$, but $\phi_{0} \neq 0$, according to theorem 2.3, the PNP equations from the incomplete energy form \eqref{2.3} take the form of
\begin{equation}\label{2.34}
    -\nabla\cdot\epsilon\nabla\phi(c) = \rho^{f} + \lambda\sum_{i=1}^{K}q_{i}c_{i}, \quad in \;\Omega,
\end{equation}
\begin{equation}\label{2.35}
    \frac{\partial c_{i}}{\partial t}=\nabla \cdot (D_{i}[\nabla c_{i}+\beta c_{i}\nabla(q_{i}[\phi(c)-\frac{1}{2}\phi_{D}(c)])]),
    \quad in \;\Omega_{s}, i=1, 2, \cdots, K.
\end{equation}
\begin{align*}
\epsilon\frac{\partial\phi}{\partial n} &= \sigma \quad &on \;\Gamma_{N},\\
\phi &= \phi_{0} \quad &on \;\Gamma_{D},\\
c_{i} &= c_{i}^{b} \quad &on \;\Gamma_{D},\\
J_{i}\cdot n &= 0 \quad &on \;\Gamma_{m}.
\end{align*}
Obviously, this is inconsistent with the established physics in this area. The drift term in the right hand side of Eq.~\eqref{2.35} originates from the electric field driving ($\nabla\phi$) and should be irrelevant to $\phi_{D}$ which is introduced only for mathematical analysis of the incomplete free energy form and shouldn't change the physical phenomenon. The physical phenomenon should not changed by $\phi_D$. Therefore, this is actually another main reason to question the previous energy functionals. It also suggests that adding the boundary interactions into the free energy is necessary to make it consistent to PDEs. In subsection 2.4.2,
we will give numerical simulations for a cylinder nanopore to further study the different current-voltage predictions from these two derived new PNP models.
In next subsection we will prove that the complete energy functional form \eqref{2.4} satisfies the energy dissipation law.
\subsubsection{Energy dissipation law}
Electro-diffusion process in electrolyte solution is a energy dissipation process. This requires that the evolutionary equation system such as the PNP equations need to satisfy the energy dissipation law. This subsection studies the energy dissipation properties of the energy forms and the PNP systems.
We first consider the free energy functional \eqref{2.1} with isothermal assumption and vanishing boundary conditions. For simplicity, a constant $\epsilon$ is considered, the ionic system \eqref{2.32}-\eqref{2.33} has been shown in Ref.~\refcite{Shixin13} to satisfy the following energy dissipation law,
\begin{align*}
  \frac{d}{dt}E^{total} &=\frac{d}{dt}
  [\int_{\Omega}(K_{B}T(c_{1}\ln\frac{c_{1}}{c_{1}^{\infty}}+ c_{2}\ln\frac{c_{2}}{c_{2}^{\infty}} )
  +\frac{\epsilon}{2}|\nabla \phi|^{2})dx] \\
     &= -\int_{\Omega}[\frac{D_{1}}{K_{B}T}c_{1}|\nabla \mu_{1}|^{2}
    +\frac{D_{2}}{K_{B}T}c_{2}|\nabla \mu_{2}|^{2}]dx.
\end{align*}
If the PNP system \eqref{2.32}-\eqref{2.33} with generic Dirichlet and Neumann boundary conditions on the outer boundary, it satisfies the energy law,
\begin{align*}
  \frac{d}{dt}E^{total} &=\frac{d}{dt}
  [\int_{\Omega}(K_{B}T(c_{1}\ln\frac{c_{1}}{c_{1}^{\infty}}+ c_{2}\ln\frac{c_{2}}{c_{2}^{\infty}} )
  +\frac{\epsilon}{2}|\nabla \phi|^{2})dx] \\
     &= -\int_{\Omega}[\frac{D_{1}}{K_{B}T}c_{1}|\nabla \mu_{1}|^{2}
    +\frac{D_{2}}{K_{B}T}c_{2}|\nabla \mu_{2}|^{2}]dx\\
    &+\int_{\Gamma_{D}}\epsilon\phi_{0}\frac{d}{dt}(\frac{\partial\phi}{\partial n})dx
    +\int_{\Gamma_{N}}\frac{d}{dt}(\sigma\phi)dx.
\end{align*}
If the last two terms are large enough, this PNP system doesn't satisfy the energy dissipation law.
But using the complete free energy functional \eqref{2.4}, we will show as following that
if we begin with the complete free energy functional \eqref{2.4}, the aforementioned PNP system also satisfies the energy dissipation law,
\begin{align*}
  \frac{d}{dt}E^{total}
  &=\frac{d}{dt}
  [\int_{\Omega}(K_{B}T(c_{1}\ln\frac{c_{1}}{c_{1}^{\infty}}+ c_{2}\ln\frac{c_{2}}{c_{2}^{\infty}} )
  +\frac{1}{2}\rho\phi)dx - \int_{\Gamma_{N}}\frac{1}{2}\sigma\phi dx
  -\int_{\Gamma_{D}}\frac{1}{2}\epsilon\frac{\partial \phi}{\partial n}\phi_{0}dx] \\
  &=\frac{d}{dt}
  [\int_{\Omega}(K_{B}T(c_{1}\ln\frac{c_{1}}{c_{1}^{\infty}}+ c_{2}\ln\frac{c_{2}}{c_{2}^{\infty}} )
  +\frac{\epsilon}{2}|\nabla \phi|^{2})dx -\int_{\Gamma_{N}}\sigma\phi dx
  - \int_{\Gamma_{D}}\epsilon\frac{\partial \phi}{\partial n}\phi_{0}dx] \\
  &= -\int_{\Omega}[\frac{D_{1}}{K_{B}T}c_{1}|\nabla \mu_{1}|^{2}
  +\frac{D_{2}}{K_{B}T}c_{2}|\nabla \mu_{2}|^{2}]dx.
\end{align*}
The dissipation functional is a sum of two parts, which are all non-positive. This indicates that the "true" total energy defined in \eqref{2.4} do decrease along with the dissipative electro-diffusion process.
\subsubsection{Numerical simulation in a cylinder nanopore system}
In this subsection, we present an example with a cylinder nanopore to further investigate the difference between the standard traditional PNP and the "screwed" PNP models.
A cylinder nanopore with a height of 50$\AA$ and a pore radius of 2$\AA$ is placed in the middle of a cubic box of
$100\AA \times 100\AA \times 100\AA $.
A charge density is $-0.02C/m^{2}$ is set on the inner surface of the nanopore and the potential on the lower boundary of the cubic box is fixed to be zero, while the upper boundary values (taken as membrane potentials) change from -200mV to 200mV with a step length of 50mV.
In this example, we use a finite element method to solve these PNP equations in the solvent region $\Omega_s$ and do not consider the molecular domain $\Omega_m$. The geometry and a mesh of the cylinder nanopore is illustrated in Fig. 3.
\begin{figure}[pb]
\centering
  \includegraphics[width=1.9in]{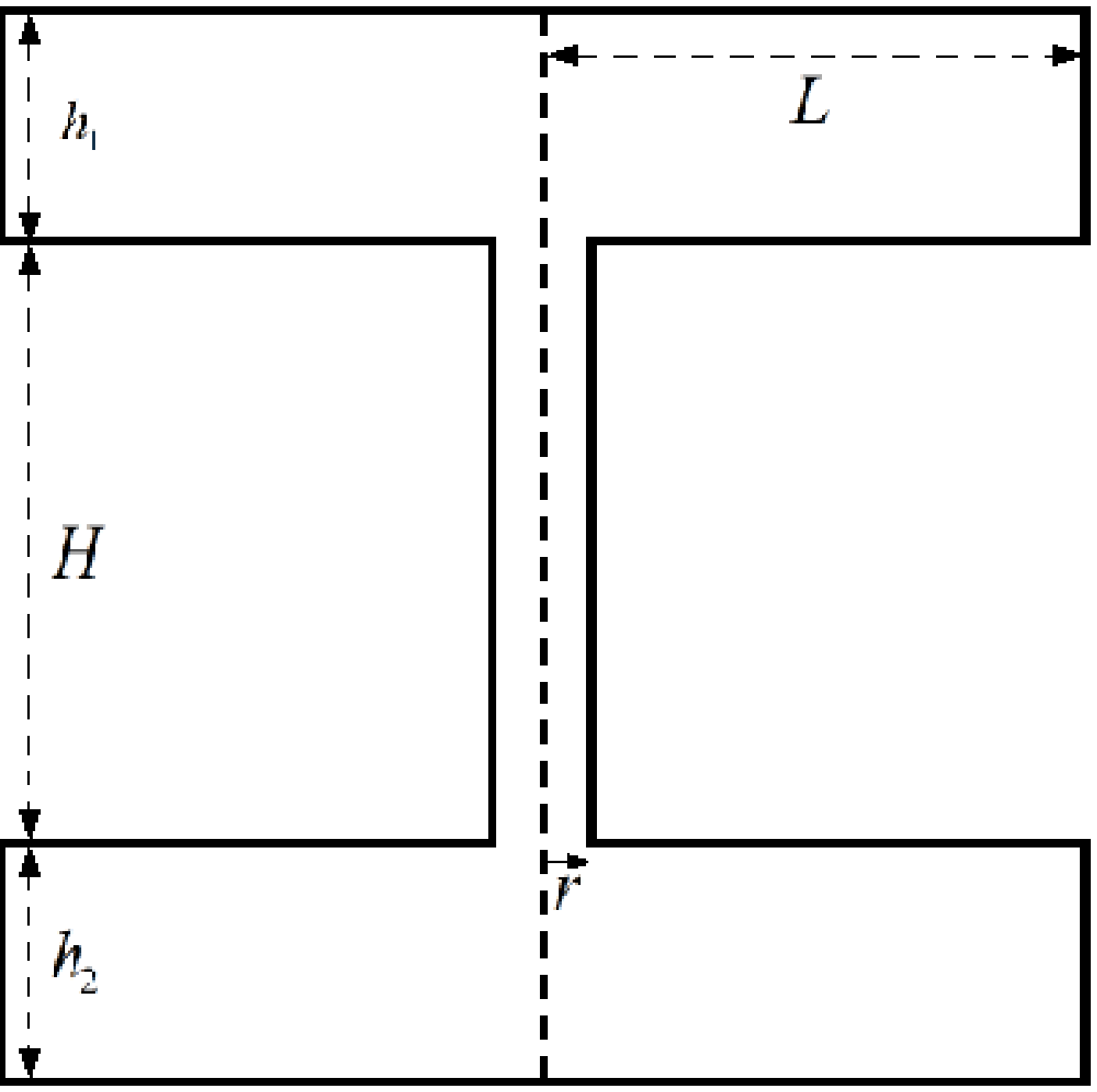}
  \includegraphics[width=2.7in]{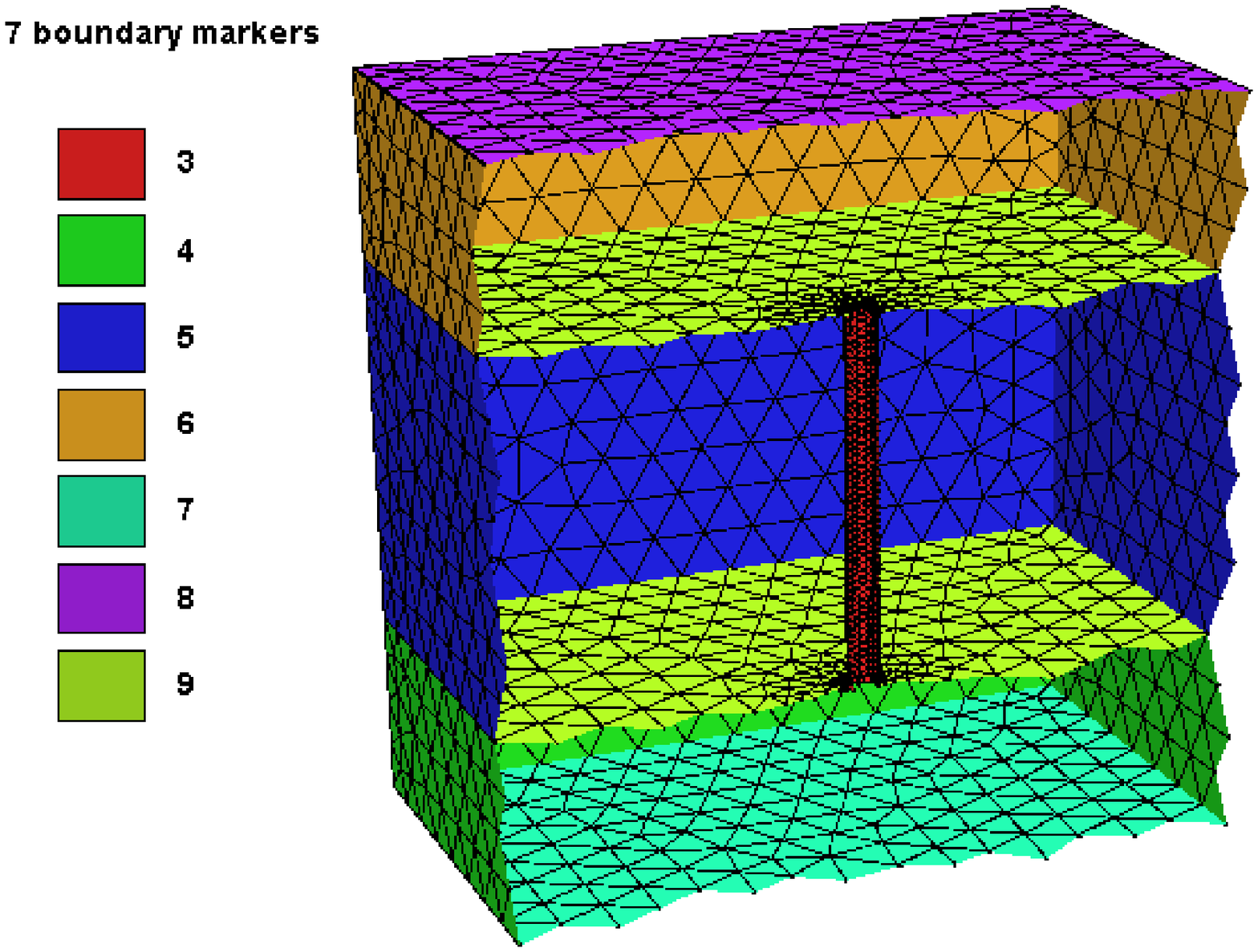}
  \caption{The geometry and mesh of the cylinder nanopore.}
\end{figure}

The electrical current of the traditional PNP model across the pore can be calculated as:
\begin{equation*}
    I_{z} = -\sum_{i}q_{i}\int_{S}D_{i}(\frac{\partial c_{i}}{\partial z} + \frac{q_{i}}{k_{B}T}c_{i}\frac{\partial \phi}{\partial z})dxdy,
\end{equation*}
where S is a cut plane at any cross section inside the pore.

For the PNP model Eqs.~\eqref{2.34}-\eqref{2.35} from incomplete energy form \eqref{2.3}, the electrical current across the pore is calculated as:
\begin{equation*}
    I_{z} = -\sum_{i}q_{i}\int_{S}D_{i}(\frac{\partial c_{i}}{\partial z} + \frac{q_{i}}{k_{B}T}c_{i}\frac{\partial (\phi-\frac{1}{2}\phi_{D})}{\partial z})dxdy.
\end{equation*}
In the PNP model, the current can be split into two parts: the concentration diffusion part
\begin{equation*}
    I_{diff} = -\sum_{i}q_{i}\int_{S}D_{i}\frac{\partial c_{i}}{\partial z}dxdy,
\end{equation*}
and the potential drift part
\begin{equation*}
    I_{drift} = -\sum_{i}q_{i}\int_{S}D_{i}(\frac{q_{i}}{k_{B}T}c_{i}\frac{\partial \phi}{\partial z})dxdy.
\end{equation*}
The "screwed" PNP from incomplete energy form has a similar concentration diffusion part but a different potential drift part
\begin{equation*}
    I_{drift} = -\sum_{i}q_{i}\int_{S}D_{i}(\frac{q_{i}}{k_{B}T}c_{i}\frac{\partial (\phi-\frac{1}{2}\phi_{D})}{\partial z})dxdy.
\end{equation*}
\begin{figure}[pb]
\centerline{\psfig{file=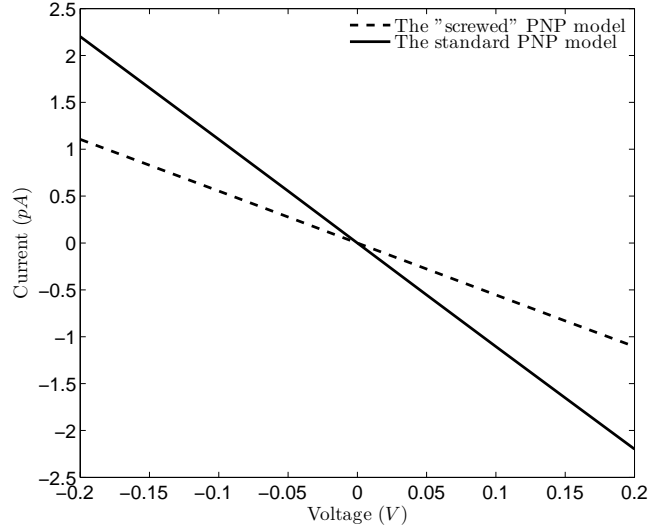,width=3.8in}}
\vspace*{8pt}
\caption{The current voltages characteristics obtained with the traditional (solid line) and "screwed" (dashed line) PNP models at bulk concentration 0.1M and membrane potential -0.2V.}
\end{figure}
Through comparison between the currents calculated by the PNP model and the "screwed" PNP model, it is observed that with such system setup the magnitude of current in the "screwed" PNP model derived from incomplete energy tends to be smaller than that in the traditional PNP model (see Fig. 4). The current resulted from the potential drift part is dominant compared to that from the concentration diffusion part (compare the order of magnitude in Figs. 5(a) and 5(b)). It is also observed that in the "screwed" PNP model , the potential drift part significantly underestimates the magnitude of the current, whereas the diffusion part exposes the opposite property.
\begin{figure}[bp]
\centering
\includegraphics[width=3.6in]{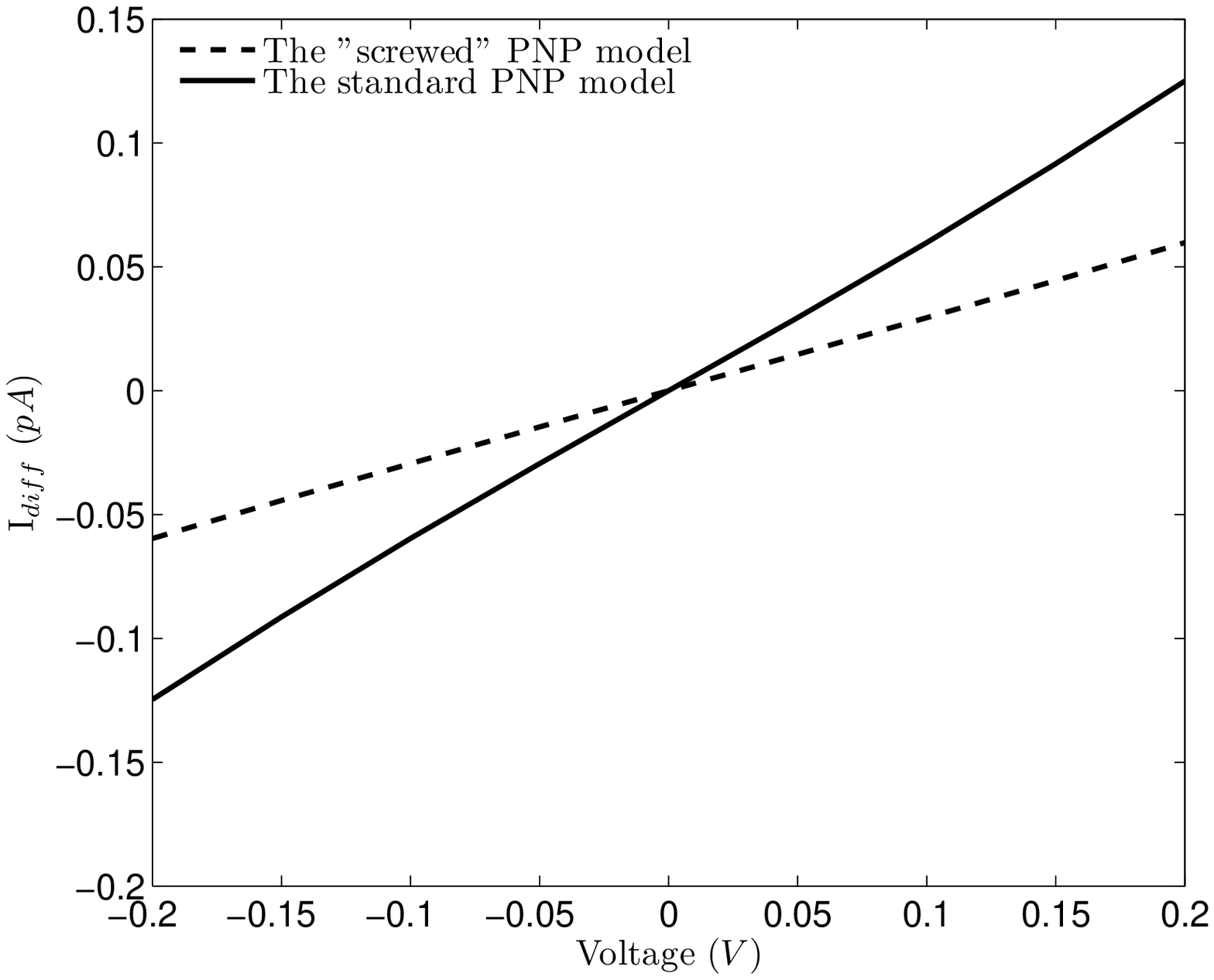}
\includegraphics[width=3.6in]{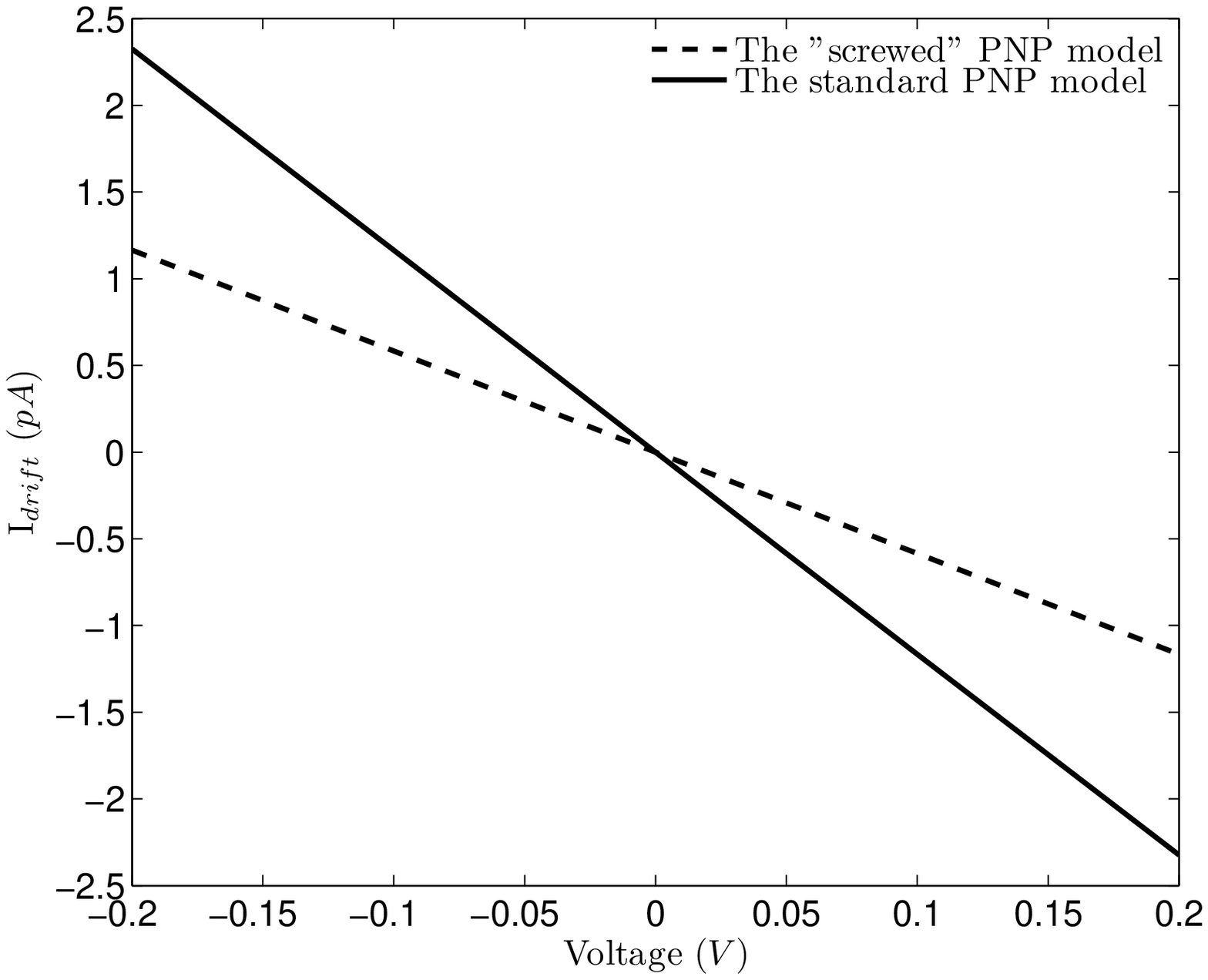}
\caption{Contribution of (a) the diffusion and (b) the drift parts of current in the traditional (solid line) and "screwed" (dashed line) PNP models.}
\end{figure}
\section{Conclusion}
In this paper, we present a mean field free energy functional of dielectrically inhomogeneous electrolyte solution in a finite domain with genetic Neumann/Dirichlet boundary conditions for potential.
In this new energy functional, the boundary interaction terms are physically reasonable, and are also crucial in mathematical analysis in order to consistently derive the correct PB and PNP equations.
We also show that in presence of non-zero Dirichlet boundary conditions for electric potential, the traditional energy form is not consistent with the traditional PB and PNP equations. Using variational method to the previous energy functional (usually by introducing corresponding homogeneous problem) may result in screwed (non-physical) Boltzmann distribution and PB/PNP models. Our numerical examples demonstrate the significant deviations of the results originated from the screwed models.
Furthermore, in a particular interesting case where the dielectric coefficient of the electrolyte solution depends on the local ionic concentrations, we derive the generalized PB and PNP equations from our complete free energy functional.
As for more complicate boundary conditions, it may be still an issue for free energy functional analysis.

\section*{Acknowledgments}
The authors thank Hanlin Li for helpful discussion.
This work was Supported by Science Challenge Project, No. JCKY2016212A503,
and China NSF (NSFC 91530102, NSFC 21573274).

\end{document}